\documentclass[envcountreset,envcountsect,runningheads]{llncs}
\usepackage{amsmath,amsfonts,amstext,amssymb,latexsym,epic,eepic,url}
\usepackage[pdftex]{graphicx}
\usepackage[mathscr]{eucal}
\usepackage{multirow}

\usepackage{tabularx} 
\newcolumntype{C}{>{\centering\arraybackslash}X} 
\usepackage{caption}
\usepackage{subcaption}
\captionsetup{compatibility=false}
\usepackage{array}
\usepackage{framed}
\usepackage{xcolor}

\usepackage{cite}





\usepackage{multirow,bigdelim}

\usepackage{algorithmicx}
\usepackage[ruled]{algorithm}
\usepackage{algpseudocode}

\usepackage[framemethod=tikz]{mdframed}
\usepackage{color,soul,ulem}

\usetikzlibrary{arrows}

\usepackage{multirow}

\usepackage{mathrsfs}

\usepackage[ margin=1.5in]{geometry}

\usepackage{float}

\newcommand{\fd}{\rightarrow}

\newcommand{\spmvd}{\twoheadrightarrow_{sp}}
\newcommand{\mvd}{\twoheadrightarrow}
\newcommand{\spfd}{\rightarrow_{sp}}
\newcommand{\spk}[1]{sp\langle #1 \rangle}
\newcommand{\nul}{\texttt{NULL}}

\usepackage{scalerel,amssymb}

\usepackage{tikz}

\usetikzlibrary{intersections}

\usetikzlibrary{decorations.markings}

\usetikzlibrary{chains,fit,shapes,decorations.pathreplacing}

\tikzstyle{vertex}=[circle, draw, inner sep=0pt, minimum size=6pt]

\title{Approximate Integrity Constraints in Incomplete Databases With Limited Domains\thanks{Research of the second author was partially supported by the
    National Research, Development and Innovation Office (NKFIH)
    grants K--116769 and SNN-135643. This work was also   supported by the BME-
Artificial Intelligence FIKP grant of EMMI (BME FIKP-MI/SC) and by the Ministry of Innovation and
Technology and the National Research, Development and Innovation
Office within the Artificial Intelligence National Laboratory of Hungary.
}}
\toctitle{Approximate Integrity Constraints in Incomplete Databases With Limited Domains}
\titlerunning{Approximate Keys and FDs}
\author{Munqath Al-atar\inst{1}, Attila Sali\inst{2,}\inst{3}}
\institute{ITRDC,
University of Kufa\\ {\tt munqith.alattar@uokufa.edu.iq}\and
Department of Computer Science and Information Theory,\\ Budapest University of Technology and Economics \and
Alfr\'ed R\'enyi Institute of Mathematics\\
{\tt sali.attila@renyi.hu}}

\begin{document}
\maketitle
\setlength{\tabcolsep}{8pt}
\renewcommand{\arraystretch}{1.4}

\begin{abstract}
In case of incomplete database tables, a possible world is obtained by replacing any missing value by a value from the corresponding attribute's domain that can be infinite. A possible key or possible functional dependency constraint is satisfied by an incomplete table if we can obtain a possible world that satisfies the given key or functional dependency. On the other hand, a certain key or certain functional dependency holds if all possible worlds satisfy the constraint, A strongly possible constraint is an intermediate concept between possible and certain constraints, based on the strongly possible world approach (a strongly possible world is obtained by replacing \nul's by a value from the ones appearing in the corresponding attribute of the table).
A strongly possible key or functional dependency holds in an incomplete table if there exists a strongly possible world that satisfies the given constraint. 
In the present paper, we introduce strongly possible versions of multivalued dependencies and cross joins, and we analyse the complexity of checking the validity of a given strongly possible cross joins.
We also study approximation measures of strongly possible keys (spKeys), functional dependencies (spFDs), multivalued dependencies (spMVDs) and cross joins (spCJs). $g_3$ and $g_5$ measures are used to measure how close a table $Y$ satisfies a constraint if it is violated in $T$. Where the two measures $g_3$  and $g_5$ represent the ratio of the minimum number of tuples that are required to be removed from or added to, respectively, the table so that the constraint holds. Removing tuples may remove the cases that caused the constraint violation and adding tuples can extend the values shown on an attribute.  
For spKeys and spFDs, We show that the $g_3$ value is always an upper bound of the $g_5$ value for a given constraint in a table. However, there are tables of arbitrarily large number of tuples and a constant number of attributes that satisfy $g_3-g_5=\frac{p}{q}$ for any rational number $0\le\frac{p}{q}<1$. 
On the other hand, we show that the two measures values are independent of each other in the case of spMVDs and spCJs.

We also treat complexity questions of determination of the approximation values.

\end{abstract}

\textbf{Keywords:} Strongly possible functional dependencies, Strongly possible keys, cross joins, multivalued dependency, incomplete databases, data Imputation, Approximate functional dependencies, approximate keys.

\section{Introduction}
Missing values is a common issue in many databases because of many reasons such as data repair, error while data transmission, data storage defect, etc \cite{farhangfar2007novel}.
Imputing a value instead of any missed information is the most common way o handle this issue \cite{lipski1981databases}.

In \cite{alattar2019keys}, a new imputation approach was introduced that considers only the active domain for each attribute (the data that are already there in that attribute of the table). The resulting complete table using this imputation approach is called a \emph{strongly possible world}. Using only the shown data for imputation ensures that we always use values related to the attributes domains.

Using this concept, new versions of integrity constraints called strongly possible constraints, such as keys (spKeys), functional dependencies (spFDs), multivalued dependencies (spMVDs) and cross joins (spCJs) were defined in \cite{alattar2020strongly,alattar2020functional,alattaratila_mvd} or in case of spCJs are introduced in the present paper,  that are satisfied after replacing any missing value (\nul) with a value that is already shown in the corresponding attribute. 
In section \ref{defs}, we provide the formal definitions.

The present paper is the extended version of the conference paper \cite{al2022approximate} that  continued the work started in  \cite{alattar2020strongly}, where an approximation approach to measure how close a table satisfies a key constraint was introduced. The closeness is measured by calculating the ratio of the minimum number of tuples to be removed so that the constraint holds in the resulting table. This is necessary as the active domains may not contain enough values to replace the \nul\ so that the resulting table satisfies the key.

Here we also consider strongly possible multivalued dependencies and cross joins.
We introduce approximation measures of spKeys, 
spFDs and spCJs by adding tuples. Adding a tuple with new unique values will add more values to the attributes' active domains, thus some unsatisfied equality generating constraints may get satisfied. On the other hand, adding tuples consisting of \nul\ values give enough tuples for tuple generating constraints. Note that if possible keys or functional dependencies \cite{kohler2016possible} are considered, then adding tuples does not change the satisfaction of the given constraint, as in that case any domain values could be used to replace \nul's.

For example, Table \ref{add_vs_rmv} does not satisfy the spKey $\spk{Car\_Model, Door No}$, while they are designed to distinguish Cars Types. We need to either remove two tuples or add a new tuple with new door number value to satisfy $\spk{Car\_Model, Door No}$.
Furthermore,car model and door number should, generally, determine the engine type, so, the added tuple with a new value in the $Door No$ attribute will satisfy $(Car\_Model, Door No) \spfd Engine\_Type$ instead of removing other two tuples.
\begin{table}[ht]
    \centering
\begin{tabular}{ c c c }
		\hline
		\textbf{Car\_Model} & \textbf{Door No} & \textbf{Engine\_Type}  \\  \hline
		 BMW & 4 & $\bot$ \\ 
		 BMW & $\bot$ & electric \\
		 Ford & $\bot$ & V8\\
		 Ford & $\bot$ & V6\\
		 \hline
	\end{tabular}
    \caption{Cars Information Incomplete Table}
    \label{add_vs_rmv}
\end{table}

If the key or the FD is not satisfied in the total part of the table, then adding a new values is useless. So, in this paper we suppose that $\spk{K}$ or $X\spfd Y$ is satisfied by the $K$-total part of the table (for exact definitions see Section~\ref{defs}).
In this paper, we assume that there is at least one non-null value for each attribute to have a non-empty active domain. 
The main objectives of this paper are:

\begin{itemize}
    \item Extend the $g_3$ measure defined for spKeys in \cite{alattar2020strongly} to other strongly possible integrity constraints. 
    \item Propose the $g_5$ approximation measure for strongly possible constraints, that measures the minimum number of tuples that are required to be added to satisfy the constraints. 
    \item Compare $g_5$ with $g_3$ and show that for spKeys and spFDs it is more effective to adding new tuples than removing violating ones.
    \item We show that except for the inequality above, $g_3$ and $g_5$ are independent of each other.
    \item Introduce strongly possible versions of multivalued dependencies and cross joins
    \item Compare spMVDs with NMVDs introduced by Lien \cite{lien1979multivalued}.
    \item Analyse the complexity of checking the validity of a given spCJ, and also determining the approximation measures of a given spCJ. 
\end{itemize}
Possible worlds were studied by many sets of authors, such as \cite{kohler2015possible,de2004possible,zimanyi1997imperfect}. Generally, a \nul can be replaced by any value from a domain that can be infinite, then an infinite number of worlds can be considered. While strongly possible worlds considers only values allowed by the original table given to replace with \nul s.
The paper is organized as follows. Section \ref{defs} gives the basic definitions and notations. Section \ref{rltd-wrk} discusses some related work. Sections \ref{spkeySec} and \ref{spfdSec} provide the approximation measures for spKeys and spFDs respectively. Strongly possible versions of tuple generating constraints are introduced in Section~\ref{sec:tuple-generating}. Their approximation measures are analysed, furthermore complexity questions are treated in this section.  The conclusions and some suggested future research directions are discussed in Section \ref{cnclosnSec}.

\section{Basic Definitions}\label{defs}
Let $ R = \{ A_{1},A_{2},\ldots A_{n}\} $ be a relation schema. The set of all the possible values for each attribute $ A_i \in R $ is called the domain of $A_i$ and denoted as $ D_{i}$ = $dom(A_{i})$ for $i$ = 1,2,$\ldots n$. Then, for $X\subseteq R$, then $D_X = \prod\limits_{\forall A_i \in K} D_i$.

An instance $T$ = ($t_{1}$,$t_{2}, \ldots t_{s}$) over $R$ is a list of tuples such that each tuple is a function $t :  R \rightarrow \bigcup_{A_i\in R} dom(A_i)$ and $t[A_i] \in dom(A_i)$ for all $A_i$ in $R$. By taking a list of tuples we use the \emph{bag semantics} that allows several occurrences of the same tuple. Usage of the bag semantics is justified by that SQL allows multiple occurrences of tuples. Of course, the order of the tuples in an instance is irrelevant, so mathematically speaking we consider a \emph{multiset of tuples} as an instance.
For a tuple $t_{r} \in T$ and $X \subset R$, let $t_{r}[X]$ be the restriction of $t_{r}$  to $X$.

It is assumed that $\bot$ is an element of each attribute's domain that denotes missing information.
$t_{r}$ is called $V$-total for a set $V$ of attributes  if $\forall A\in V$, $t_r[A]\neq\bot$. Also, $t_{r}$ is a total tuple if it is $R$-total.   $t_{1}$ and $t_{2}$ are \emph{weakly similar} on $X \subseteq R$ denoted as $t_{1}[X] \sim_{w} t_{2}[X]$ defined by K\"ohler et.al. \cite{kohler2016possible} if 

\[ \forall A \in X \quad (t_{1}[A] = t_{2}[A] \textrm{ or } t_{1}[A] =\bot \textrm{ or } t_{2}[A] = \bot). \]

Furthermore, $t_{1}$ and $t_{2}$ are \emph{strongly similar} on $X \subseteq R$ denoted by $t_{1}[X] \sim_{s} t_{2}[X]$ if

$$  \forall A \in X \quad  (t_{1}[A] = t_{2}[A] \neq \bot).$$
For the sake of convenience we write $t_{1} \sim_{w} t_{2}$ if $t_{1}$ and $t_{2}$ are weakly similar on $R$ and use the same convenience for strong similarity.
Let $T= (t_{1},t_{2}, \ldots t_{s})$ be a table instance over $R$. Then, $T'= (t'_{1},t'_{2}, \ldots t'_{s})$ is a \emph{possible world} of $T$, if $t_{i} \sim_{w} t'_{i}$ for all $i=1,2,\ldots s$ and $T'$ is completely \nul -free. That is, we replace the occurrences of $\bot$ with a value from the domain $ D_{i} $ different from $\bot$ for all tuples and all attributes. A active domain of an attribute is the set of all the distinct values shown under the attribute except the \nul. Note that this was called the \emph{visible domain} of the attribute in papers \cite{alattar2019keys,alattar2020functional,alattar2020strongly,al2022strongly}. 

\begin{definition}\label{vd-def} The \emph{active domain} of an attribute $ A_i $ ($ VD^T_{i} $) is the set of all distinct values except $\bot$ that are already used by tuples in $ T $:
$$ VD^T_{i} = \{t[A_{i}] : t \in T\} \setminus \{\bot \} \textrm{ for } A_i \in R.$$ 
\end{definition}To simplify notation, we omit the upper index $T$ if it is clear from the context what instance is considered.

Then the $ VD_{1} $ in Table~\ref{fig:cai1} is \{Mathematics, Datamining\}. The term active domain refers to the data that already exist in a given dataset. For example, if we have a dataset with no information about the definitions of the attributes' domains, then we use the data itself to define their own structure and domains. This may provide more realistic results when extracting the relationship between data so it is more reliable to consider only what information we have in a given dataset. 

While a possible world is obtained by using the domain values instead of the occurrence of \nul, a strongly possible world is obtained by using the active domain values. 

\begin{definition}\label{def:spWorld} A possible world $T^\prime$ of $T$ is called a \emph{strongly possible world (spWorld)} if  $t'[A_i]\in VD^T_i$ for all $t'\in T'$ and $A_i\in R$. 
\end{definition}
Note that $VD^T_i=\emptyset$ might happen in the degenerate case of an attribute having only \nul\ values. In order to be able to define strongly possible world for any table, a special symbol $ssymb$ is added to $VD^T_i$ in that degenerate case. The symbol $ssymb$ is assumed not being contained in any domain of any attribute.
The concept of \emph{strongly possible world} was introduced in \cite{alattar2019keys}. A strongly possible worlds allow us to define \emph{strongly possible keys (spKeys)} and \emph{strongly possible functional dependencies (spFDs)}.
\begin{definition}\label{spfd-spkey_def} A strongly possible functional dependency, in notation $X\spfd Y$, holds in table $T$ over schema $R$ if there exists a strongly possible world $T'$ of $T$ such that $T'\models X\fd Y$. That is, for any $t'_1,t'_2\in T'$ $t'_1[X]=t'_2[X]$ implies $t'_1[Y]=t'_2[Y]$.
  The set of attributes $X$ is a strongly possible key, if there exists a strongly possible world $T'$ of $T$ such that $X$ is a key in $T'$, in notation $\spk{X}$. That is, for any $t'_1,t'_2\in T'$ $t'_1[X]=t'_2[X]$ implies $t_1'=t_2'$. 
\end{definition}
Note that this is not equivalent with spFD $X\spfd R$, since we use the bag semantics. For example, \{Course Name, Year\} is a strongly possible key of Table~\ref{fig:cai1} as the strongly possible world in Table~\ref{fig:cai2} shows it.

\begin{table}[ht]
    \centering
\begin{tabular}{ l c c c c }
		\hline
		\textbf{Course Name} & \textbf{Year} & \textbf{Lecturer}  & \textbf{Credits} & \textbf{Semester}    \\  \hline
		Mathematics & 2019 & $\bot$ & 5  & 1 \\ 
		Datamining & 2018 & Sarah & 7 & $\bot$  \\ 
		$\bot$ & 2019 & Sarah & $\bot$ & 2 \\ \hline
	\end{tabular}
    \caption{Incomplete Dataset}
    \label{fig:cai1}
\end{table}
\begin{table}[ht]
    \centering
    \begin{tabular}{ l c c c c }
		\hline
		\textbf{Course Name} & \textbf{Year} & \textbf{Lecturer}  & \textbf{Credits}  & \textbf{Semester}  \\  \hline
		Mathematics & 2019 & Sarah & 5 & 1  \\ 
		Datamining & 2018 & Sarah & 7 & 2 \\ 
		Datamining & 2019 & Sarah & 7 & 2 \\ \hline
	\end{tabular}
	\caption{Complete Dataset}
    \label{fig:cai2}
\end{table}
Following \cite{hannula2022discovery} strongly possible cross joins are defined as
\begin{definition}\label{def:spCJ}
    A \emph{cross join (CJ)} over relation schema $R$ is an expression
$X \times Y$ where $X$ and $Y$ are subsets of $R$. A table $T$ over
$R$ is said to satisfy the CJ $X \times Y$ over $R$ if and only if for
all $t_1, t_2 \in T$ there is some $t\in r$ such that $t[X] = t_1[X]$ and
$t[Y ] = t_2[Y ]$. If $T$ does not satisfy $X \times Y$, then we also say that $r$ \emph{violates} $X \times Y$. A \emph{strongly possible cross join (spCJ)} over schema $R$ is an expression $X \times_{sp} Y$ where $X$ and $Y$ are subsets of $R$. A table $T$ over
$R$ is said to satisfy the spCJ $X \times_{sp} Y$ over $R$ if and only if there exists a strongly possible world $T'$ of $T$ such that $T'\models X \times Y$.
\end{definition}
Similarly, \emph{strongly possible multivalued dependencies} are introduced as follows.
\begin{definition}
A \emph{strongly possible multivalued dependency (spMVD)} over schema $R$ is an expression $X \spmvd Y$. The spMVD $X \spmvd Y$ is satisfied in $T$ if and only if there exists a strongly possible world $T^{\prime}$ of $T$ such that $T^{\prime} \models X\mvd Y$.
\end{definition}

If $T=\{t_1,t_2,\ldots ,t_p\}$ and  $T'=\{t'_1,t'_2,\ldots ,t'_p\}$ is an spWorld of it with $t_i\sim_w t_i'$, then $t_i'$ is called an \emph{sp-extension} or in short an \emph{extension} of $t_i$. Let $X\subseteq R$ be a set of attributes and let $t_i\sim_w t_i'$ such that for each $A\in R\colon t_i'[A]\in VD(A)$, then $t_i'[X]$ is an \emph{strongly possible extension of $t_i$ on $X$ (sp-extension)}

\section{Related Work}\label{rltd-wrk}
\subsubsection{Approximation measures}
Giannella et al. \cite{giannella2004approximation} measure the approximate degree of functional dependencies. They developed the IFD approximation measure and compared it with the other two measures: $g_3$ (minimum number of tuples need to be removed so that the dependency holds) and $\tau$ (the probability of a correct guess of an FD satisfaction) introduced in \cite{kivinen1995approximate} and \cite{goodman1979measures} respectively. They developed analytical bounds on the measure differences and compared these measures analysis on five datasets. The authors show that when measures are meant to define the knowledge degree of $X$ determines $Y$ (prediction or classification), then $IFD$ and $\tau$ measures are more appropriate than $g_3$. On the other hand, when measures are meant to define the number of "violating" tuples in an FD, then, $g_3$ measure is more appropriate than $IFD$ and $\tau$.
This paper extends the earlier  work of \cite{alattar2020strongly} that utilized the $g_3$ measure for spKeys by calculating the minimum number of tuples to be removed from a table so that an spKey holds if it is not. The same paper proposed the $g_4$ measure that is derived from $g_3$ by emphasizing the effect of each connected component in the table's corresponding bipartite graph (where vertices of the first class of the graph represent the table's tuples and the second class represent all the possible combinations of the attributes' active domains). 
In this paper, we propose a new measure  $g_5$ to approximate FDs by adding new tuples with unique values rather than deleting tuples as in $g_3$. 

Several other researchers worked on approximating FDs in the literature.  King et al. \cite{king2003discovery} provided an algorithmic method to discover functional and approximate functional dependencies in relational databases. The method provided is based upon the mathematical theory of partitions of row identification numbers from the relation, then determining non-trivial minimal dependencies from the partitions. They showed that the operations that need to be done on partitions are both simple and fast.
\subsection{Algorithmic approaches}
In \cite{varkonyi2017multi}, Varkonyi et al. introduced a structure called Sequential Indexing Tables (SIT) to detect an FD regarding the last attribute in their sequence. SIT is a fast approach so it can process large data quickly. The structure they used does not scale efficiently with the number of the attributes and the sizes of their domains, however. Other methods, such as TANE and FastFD face the same problem \cite{papenbrock2015functional}. TANE was introduced by Huhtala \cite{huhtala1999tane} to discover functional and approximate dependencies by taking into consideration partitions and deriving valid dependencies from these partitions in a breadth-first or level-wise manner. 

Bra, P. De, and Jan Paredaens gave a new decomposition theory for functional dependencies in \cite{bra1983conditional}. They break up a relation into two subrelations whose union is the given relation and a functional dependency that holds in one subrelation is not in the other. 

In \cite{tusor2020memory}, Tusor et al. presented the Parallelized Sequential Indexing Tables method that is memory-efficient for large datasets to find exact and approximate functional dependencies. Their method uses the same principle of Sequential Indexing Tables in storing data, but their training approach and operation are different. 

Pyro is an algorithm to discover all approximate FDs in a dataset presented by Kruse \cite{kruse2018efficient}. Pyro verifies samples of agree sets and prunes the search spaces with the discovered FDs. On the other hand, based on the concept of "agree sets", Lopes et al. \cite{lopes2002functional} developed an algorithm to find a minimum cover of a set of FDs for a given table by applying the so-called "Luxenburger basis" to develop a basis of the set of  approximate FDs in the table.

Simovici et al. \cite{simovici2002impurity} provide an algorithm to find purity dependencies such that, for a fixed right-hand side ($Y$), the algorithm applies a level-wise search on the left-hand sides ($X$) so that $X\fd Y$ has a purity measure below a user-defined threshold.
Other algorithms were proposed in \cite{kantola1992discovering,lopes2000efficient} to discover all FDs that hold in a given table by searching through the lattice of subsets of attributes.

In \cite{wijsen2019foundations}, Jef Wijsen summarizes and discusses some theoretical developments and concepts in Consistent query answering CQA (when a user queries a database that is inconsistent with respect to a set of constraints). Database repairing was modeled by an acyclic binary relation $\leq_{db}$ on the set of consistent database instances, where $r_1$ $\leq_{db}$ $r_2$ means that $r_1$ is at least as close to $db$ as $r_2$. One possible distance is the number of tuples to be added and/or removed. In addition to that, Bertossi studied the main concepts of database repairs and CQA in \cite{bertossi2019database}, and emphasis on tracing back the origin, motivation, and early developments.
J. Biskup and L. Wiese present and analyze an algorithm called preCQE that is able to correctly compute a solution instance, for a given original database instance, that obeys the formal properties of inference-proofness and distortion minimality of a set of appropriately formed constraints in \cite{biskup2011sound}. 

Hannula et. al. \cite{hannula2022discovery} showed that the problem deciding whether there is a cross join that holds on a gven table is W[3] complete. The present authors \cite{alattar2020functional} showed that checking one spKey can be done in polynomial time, while checking a system of spKeys, as well as checking an spFD is NP complete. In a forthcoming paper \cite{alatar2024} they also treat the complexity of determining the approximation measures of spKeys and spFDs.
\subsection{Multivalued dependencies}
Multivalued dependencies are a generalization of functional dependencies and they extend the understanding of logical database design. Multivalued dependencies lead to the fourth normal form for relational databases, where a relation schema is in the fourth normal form if all functional and multivalued dependencies are the result of keys \cite{fagin1977multivalued}.
The satisfaction of the functional dependency $X \fd Y$ is a sufficient condition for $r$ to be the lossless join of its projections $r[X Y]$ and $r[X(R \setminus Y)]$, but not a necessary condition. The satisfaction of multivalued dependency $X \mvd Y$ provides a sufficient and necessary condition for $r$ to be the lossless join of its projections $r[X Y]$ and $r[X(R \setminus Y)]$ without loss of information \cite{codd1972further}.
Informally, a relation $r$ over the relation schema $R$ satisfies the MVD $X \mvd Y$ whenever the value on $X$ determines the set of values on $Y$ independently
of the set of values on $R\setminus Y$.

The multivalued dependencies show different behaviour in the presence of missing information in the attributes of the table, because missing information changes the implication properties between attributes. For example, if the two tuples $t_1$ and $t_2$ have the values $1$ and $\bot$ respectively on the attribute $X$, they do not agree on $X$ and they do not disagree at the same time. It is possible that $t_2[X] = 1$, and this value is not yet inserted. Some research and studies discuss and analyze the problem of incomplete databases for multivalued dependencies. 

\cite{HartLink2012,lien1979multivalued,lien1982equivalence} introduced a generalized version of multivalued dependencies for databases with missing information. 
Lien \cite{lien1979multivalued} introduced a new class of multivalued dependencies in the context of null values have been, called \emph{multivalued dependency with nulls} (NMVD). The NMVD N$X\mvd Y$ holds in $R(X,Y,Z)$, where $Z$ is disjoint from $Y$, if and only if, whenever $X$ null-free ($X$-total) tuples $(x,y,z)$ and $(x,y',z')$ are tuples in $R$, so are tuples $(x,y',z)$ and $(x,y,z')$. The data value $x$ is for attributes $X$, and similarly, $y$ (or $y'$) and $z$(or $z'$) correspond to $Y$ and $Z$ respectively. ($R$ contains no two distinct tuples $t$ and $t'$ such that $t$ subsumes $t'$). In \cite{lien1982equivalence}, Lien defined the NMVD over $R$ as a statement $X\mvd Y$ and considers that the left-hand side of the dependency is $X$-total. According to this, MVD $X\mvd Y$ is satisﬁed by a relation $R$, denoted by $R \models X\mvd Y$, if and only if for every two tuples $t_1, t_2$ in the relation, the following is true: if $t_1$ and $t_2$ are $X$-total and $t_1[X] = t_2[X]$, then there is some tuples $t$ such that $t[XY] = t_1[XY]$ and $t[X(R \setminus Y)] = t_2[X(R \setminus Y)]$. Lien presented valid inference rules. It was also shown that the four basic inference rules, complementation, reflexivity, augmentation, and union, are complete and therefore sufficient for studying the formal properties of the revised multivalued dependencies.

S. Link \cite{HartLink2012} defined the \emph{weak multivalued dependency} (WMVDs) over R as a statement $\Diamond X\mvd Y$, where $XY \subseteq R$. A relation $r$ over $R$ is said to satisfy the WMVD $\Diamond X\mvd Y$ over $R$ if there is some possible world $s$ of $r$ such that for all $t_1, t_2 \in s$ the following holds: if $t_1[X] = t_2[X]$, then there is some $t \in s$ such that $t[XY] = t_1[XY]$ and $t[X(R\setminus Y)] = t_2[X(R\setminus Y)]$.
Our proposed class of multivalued dependency uses the strongly possible worlds instead of the possible worlds.

\section{SPKey Approximation}\label{spkeySec}

{Investigations of strongly possible worlds can be extended for other types of dependencies, like multivalued dependencies and cross joins. 
In \cite{alattar2020strongly}, the authors studied strongly possible keys, and the main motivation is to uniquely identify tuples in incomplete tables, if it is possible, by using the already shown values only to fill up the occurrences of \nul s. Consider the relational schema $R=$ and $K\subseteq R$. Furthermore, let $T$ be an instance over $R$ with \nul s. 
Let $T'$ be the set of total tuples  $T'=\{t'\in\Pi_{i=1}^b VD^T_i\colon \exists t\in T \text{ such that } t[K]\sim_w t'[K]\}$, furthermore let $G=(T,T';E)$ be the bipartite graph, called the \emph{$K$-extension graph of $T$}, defined by $\{t,t'\}\in E\iff t[K]\sim_w t'[K]$.
Finding a matching of $G$ that covers all the tuples in $T$ (if exists) provides the set of tuples in $T^{\prime}$ to replace  the incomplete tuples in $T$ with, to verify that $K$ is an spKey.
A polynomial-time algorithm was given in \cite{alattar2020functional} to find such matching. It is a non-trivial application of the well-known matching algorithms, as $|T'|$ is usually an exponential function of the  size of the input table $T$.

The Approximate Strongly Possible Key (ASP Key) was defined in \cite{alattar2020strongly} as follows.
\begin{definition}\label{aspkey-def}
Attribute set $K$ is an approximate strongly possible key of ratio $a$ in table $T$, in notation $asp_a^- \left\langle K \right\rangle$, if there exists a subset  $S$ of the tuples $T$ such that $T\setminus S$ satisfies $sp \left\langle K \right\rangle$, and $|S|/|T|\le a$. The minimum $a$ such that $asp_a^- \left\langle K \right\rangle$ holds is denoted by $g_3(K)$. 
\end{definition} 
The measure $g_3(K)$ represents the approximation which is the ratio of the number of tuples needed to be removed over the total number of tuples so that $sp \left\langle K \right\rangle$ holds. 
The measure $g_3(K)$ has a value between $0$ and $1$, and it is exactly $0$ when $ sp \left\langle K \right\rangle $ holds in $T$, which means we don't need to remove any tuples. For this, we used the $g_3$ measure introduced in \cite{kivinen1995approximate}, to determine the degree to which $ASP$ key is approximate. For example, the $g_3$ measure of $\spk{X}$ on Table \ref{fig:spkey_main} is 0.5, as we are required to remove two out of four tuples to satisfy the key constraint as shown in Table \ref{fig:spkey_rmv}. 

It was shown in \cite{alattar2020strongly} that the $g_3$ approximation measure for strongly possible keys satisfies
$$ g_3(K) = \frac{|T| -  \nu(G)}{|T|}. $$
where $\nu(G)$ denotes the maximum matching size in the $K$-extension graph $G$. 
The smaller value of $g_3(K)$, the closer $K$ is to being an spKey.

For the bipartite graph $G$ defined above, let $\mathscr{C}$ be the collection of all the connected components in $G$ that satisfy the spKey,
i.e. for which there exists a  matching that covers all tuples in the set ($\forall_{C\in \mathscr{C}} $ $ \nexists X \subseteq C \cap T$ such that $|X| > N(X)$ by Hall's Theorem). 
Let $D \subseteq G$ be defined as $D = G \setminus \bigcup_{\forall C \in \mathscr{C}} C$, and let $\mathscr{C^\prime}$ be the set of connected components of $D$. Let $V_C$ denote the set of vertices in a connected component $C$.
The approximation measure of strongly possible keys may be more appropriate by considering the effect of each connected component in the bipartite graph on the matching.
We consider the effect of the components of $\mathscr{C}$ to get doubled in the approximation measure, as these components represent that part of the data that do not require tuple removal. 
So a derived version of the $g_3$ measure was proposed and named $g_4$ considering these components' effects,
$$ g_4(K) = \frac{|T| - (\sum_{C\in \mathscr{C}} (|V_C|) + \sum_{ C^\prime \in \mathscr{C^\prime}}\nu(C^\prime))}{|T|+ \sum_{C \in \mathscr{C}} |V_C|}.$$

Furthermore, it was proved that for a set of attributes $K$ in any table, we have either $g_3(K)= g_4 (K)$ or  $1 <g_3(K) / g_4 (K) < 2 $. Moreover, there exist tables of an arbitrarily large number of tuples with $g_3(K) / g_4(K)=\frac{p}{q}$ for any rational number $1\le \frac{p}{q} <2$.

In this paper, we extend our investigation on approximating spKeys by considering adding new tuples instead of removing them to satisfy an spKey if possible. Removing a non-total tuple $t_1$ means that there exist another total and/or non-total tuple(s) that share the same strongly possible extension with $t_2$. The following proposition shows that we can always remove only non-total tuples if the total part of the table satisfies the key.  
\begin{proposition}\label{prop:removenontotal}
Let $T$ be an instance over schema $R$ and let $K\subseteq R$. If the $K$-total part of the table $T$ satisfies the key $sp \left\langle K \right\rangle$, then there exists a minimum set of tuples $U$ to be removed that are all non-$K$-total so that $T\setminus U$ satisfies $sp \left\langle K \right\rangle.$
\end{proposition}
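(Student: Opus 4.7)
The plan is a matching-swap argument on the $K$-extension bipartite graph $G=(T,T';E)$ recalled in Section~\ref{spkeySec}. As noted there, the size of a minimum removal set equals $|T|-\nu(G)$, and any minimum $U$ arises as $T\setminus V(M)$ for a maximum matching $M$ of $G$. It therefore suffices to exhibit a maximum matching $M^\ast$ that saturates the set $T_{\mathrm{tot}}$ of $K$-total tuples; its complement in $T$ is then a minimum removal set entirely consisting of non-$K$-total tuples.

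Before proceeding, I extract one consequence of the hypothesis. A $K$-total tuple $t$ has the unique neighbour $t[K]$ in $G$, so any matching saturating $T_{\mathrm{tot}}$ must send distinct bag-elements of $T_{\mathrm{tot}}$ to distinct elements of $T'$. Hence the assumption that the $K$-total part satisfies $\spk{K}$ forces $t_1[K]\neq t_2[K]$ for any two distinct $t_1,t_2\in T_{\mathrm{tot}}$ (under bag semantics, counting multiplicities).

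Now fix any maximum matching $M$ of $G$ and suppose it misses some $t_0\in T_{\mathrm{tot}}$. Its only neighbour $t_0[K]$ cannot itself be $M$-unmatched, for otherwise $M\cup\bigl\{\{t_0,t_0[K]\}\bigr\}$ would contradict maximality. Let $s\in T\setminus\{t_0\}$ be the $M$-partner of $t_0[K]$, so $s[K]\sim_w t_0[K]$. If $s$ were $K$-total, we would have $s[K]=t_0[K]$ and $s\neq t_0$, contradicting the distinct-$K$-value consequence recorded above. Thus $s$ is non-$K$-total, and replacing the edge $\{s,t_0[K]\}$ by $\{t_0,t_0[K]\}$ inside $M$ produces a new maximum matching $M'$ that covers $t_0$, no longer covers $s$, and leaves every other pair intact; in particular $|V(M')\cap T_{\mathrm{tot}}|=|V(M)\cap T_{\mathrm{tot}}|+1$ and the only newly uncovered tuple $s$ is non-$K$-total.

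Iterating this swap strictly increases the number of covered $K$-total tuples while preserving maximality, so after at most $|T_{\mathrm{tot}}|$ rounds we obtain a maximum matching $M^\ast$ with $T_{\mathrm{tot}}\subseteq V(M^\ast)$. Then $U:=T\setminus V(M^\ast)$ is a minimum removal set disjoint from $T_{\mathrm{tot}}$, which is the claim. The single point where care is needed — the step I would write out most explicitly in the full proof — is the bag-semantics observation that the hypothesis on the $K$-total part forces pairwise distinct $K$-values among $T_{\mathrm{tot}}$; everything else is a one-edge exchange together with a finite induction on $|T_{\mathrm{tot}}\setminus V(M)|$.
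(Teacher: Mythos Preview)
Your argument is correct and follows essentially the same route as the paper's proof: start from an arbitrary maximum matching in the $K$-extension graph, and whenever a $K$-total tuple $t_0$ is uncovered, swap the edge at its unique neighbour $t_0[K]$ to cover it, iterating until all $K$-total tuples are saturated. Your write-up is in fact more careful than the paper's, making explicit both the bag-semantics consequence (distinct $K$-values among $K$-total tuples) and the maximality step that forces $t_0[K]$ to already be matched.
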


\begin{proof} Observe that a minimum set of tuples to be removed is $T\setminus X$ for a subset $X$ of the set of vertices (tuples) covered by a particular maximum matching of the $K$-extension graph.
Let $M$ be a maximum matching, and assume that $t_1$ is total and not covered by ${M}$. Then, the unique neighbour $t_1^{\prime}$ of $t_1$ in $T^{\prime}$ is covered by an edge $(t_2, t_1^{\prime})$ of $\mathcal{M}$. Then $t_2$ is non-total since the $K$-total part satisfies $sp \left\langle K \right\rangle$, so we replace the edge $(t_2, t^{\prime})$ by the edge $(t_1, t^{\prime})$ to get matching  $M_1$ of size $|M_1 |= |M|$, and $M_1$ covers one more total tuple. Repeat this until all total tuples are covered.
\end{proof}

\subsection{ Measure $g_5$ for spKeys}

The $g_3$ approximation measure for spKeys was introduced in \cite{alattar2020strongly}. In this section, we introduce a new approximation measure for spKeys. As we consider the active domain to be the source of the values to replace each null with, adding a new tuple to the table may increase the number of the values in the active domain of an attribute. for example, consider Table \ref{fig:spkey_main}, the active domain of the attribute $X_1$ is $\{2\}$ and it changed to $\{2, 3\}$ after adding a tuple with new values as shown in Table \ref{fig:spkey_add}.

\begin{minipage}[c]{0.33\textwidth}
	\centering
        \begin{tabular}{ll}
        			\hline
\multicolumn{2}{c}{\textbf{X}} \\
$X_1$              & $X_2$          
\\\hline
$\bot$ & 1 \\
2 & $\bot$ \\
2 & $\bot$ \\
2 & 2
\\ \hline
\end{tabular}
        		\captionof{table}{Incomplete Table\\ to measure $\spk{X}$}
                \label{fig:spkey_main}
            	\end{minipage}	\begin{minipage}[c]{0.33\textwidth}
            	\centering
            		\begin{tabular}{ll}
        			\hline
\multicolumn{2}{c}{\textbf{X}} \\
$X_1$              & $X_2$          
\\\hline
$\bot$ & 1 \\
2 & 2 \\ \hline
\end{tabular}
        		\captionof{table}{The table after\\ removing ($asp_a^-\left\langle X \right\rangle$)}
                \label{fig:spkey_rmv}
            	\end{minipage}	\begin{minipage}[c]{0.33\textwidth}
            	
            	\centering
            		\begin{tabular}{ll}
        			\hline
\multicolumn{2}{c}{\textbf{X}} \\
$X_1$              & $X_2$          
\\\hline
$\bot$ & 1 \\
2 & $\bot$ \\
2 & $\bot$ \\
2 & 2 \\
3 & 3 
\\ \hline
\end{tabular}
        		\captionof{table}{The table\\ after adding ($asp_b^+\left\langle X \right\rangle$)}
                \label{fig:spkey_add}
\end{minipage}
\vspace{0.9mm}

In the following definition, we define the $g_5$ measure as the ratio of the minimum number of tuples that need to be added over the total number of tuples to have the spKey satisfied.

\begin{definition}
Attribute set $K$ is an add-approximate strongly possible key of ratio $b$ in table $T$, in notation $asp_b^+\left\langle K \right\rangle$, if there exists a set of tuples $S$ such that the table $TS$ satisfies $sp \left\langle K \right\rangle$, and $|S|/|T|\le b$. The minimum $b$ such that $asp_b^+\left\langle K \right\rangle$ holds is denoted by $g_5(K)$. 
\end{definition} 
The measure $g_5(K)$ represents the approximation which is the ratio of the number of tuples needed to be added over the total number of tuples so that $sp \left\langle K \right\rangle$ holds. 
The value of the measure $g_3(K)$ ranges between $0$ and $1$, and it is exactly $0$ when $ sp \left\langle K \right\rangle $ holds in $T$, which means we do not have to add any tuple.  For example, the $g_5$ measure of $\spk{X}$ on Table \ref{fig:spkey_main} is 0.25, as it is enough to add one tuple to satisfy the key constraint as shown in Table \ref{fig:spkey_add}.

Let $T$ be a table and $U \subseteq T$ be the set of the tuples that we need to remove so that the spKey holds in $T$, i.e, we need to remove $|U|$ tuples, while by adding a tuple with new values, we may make more than one of the tuples in $U$ satisfy the spKey using the new added values for their \nul s. In other words, we may need to add a fewer number of tuples than the number of tuples we need to remove to satisfy an spKey in the same given table. For example, Table \ref{fig:spkey_main} requires removing two tuples to satisfy $sp \left\langle X \right\rangle$, while adding one tuple is enough.

On the other hand, one may think about mixed modification of both adding and deleting tuples for Keys approximation, by finding the minimum number of tuples needs to be either added or removed. If first the additions are performed, then after that by Proposition \ref{prop:removenontotal}, it is always true that we can remove only non-total tuples; then, instead of any tuple removal, we may add a new tuple with distinct values. Therefore, mixed modification in that way would not change the approximation measure, as it is always equivalent to tuples addition only. However, if the order of removals and additions count, then it is a topic of further research whether the removals can be substituted by additions.

The values of the two measures, $g_3$ and $g_5$, range between $0$ and $1$, and they are both equal to 0 if the spKey holds (we do not have to add or remove any tuples). Proposition \ref{g3_l_g5} proves that the value of $g_3$ measure is always larger than or equal to the value of $g_5$ measure.   

\begin{proposition}\label{g3_l_g5}
For any $K\subseteq R$ with $|K|\geq 2$, we have $g_3(K)\geq g_5(K)$.
\end{proposition}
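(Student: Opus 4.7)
The plan is to prove $g_5(K)\le g_3(K)$ by direct construction: starting from a minimum removal witness of $g_3(K)$, I build an equally sized addition witness for $g_5(K)$, so that the minimum over additions is bounded by the minimum over removals.

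Let $U\subseteq T$ with $m=|U|=g_3(K)\cdot|T|$ be a minimum removal set for $\spk{K}$. Since the $K$-total part of $T$ is assumed to satisfy $\spk{K}$, Proposition~\ref{prop:removenontotal} allows me to take every $t\in U$ to be non-$K$-total. Let $M$ be a matching in the $K$-extension graph of $T\setminus U$ covering $T\setminus U$, supplying pairwise $K$-distinct sp-extensions $t^{M}$ whose entries all lie in $VD^T$. Now adjoin $m$ fresh tuples $s_1,\dots,s_m$, each totally filled with values $s_i[A]=v_{i,A}$ that appear nowhere in $T$ and are pairwise distinct across $i$ and across $A$. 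As the candidate sp-world on $T\cup\{s_1,\dots,s_m\}$, keep $t^{M}$ for $t\in T\setminus U$, use $s_i$ itself as the extension of $s_i$, and for each $t_i\in U$ choose a position $A_i\in K$ with $t_i[A_i]=\bot$, set the extension $t_i^*$ at $A_i$ to $v_{i,A_i}$, and fill any remaining $\bot$-positions of $t_i$ (inside or outside $K$) with arbitrary values from $VD^T$ (nonempty by the standing assumption). Every value used lies in the active domain of the enlarged table, so this is a legitimate sp-world.

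The main obstacle is verifying pairwise $K$-distinctness of the resulting projections, and this is precisely where the hypothesis $|K|\ge 2$ is used. Most comparisons are settled by contrasting a fresh coordinate with a non-fresh one: every projection $t^{M}[K]$ lies entirely in $VD^T$, so it disagrees with each $t_i^*$ and each $s_j$ at a fresh coordinate; two distinct fresh tuples $s_i,s_{i'}$ disagree at every coordinate; and $t_i^*$ versus $s_{i'}$ for $i\ne i'$ disagree at $A_i$, since $v_{i',A_i}\ne v_{i,A_i}$. The delicate pair is $t_i^*$ versus $s_i$: they agree at $A_i$ on the fresh value $v_{i,A_i}$, but because $|K|\ge 2$ some $A'\in K\setminus\{A_i\}$ exists, and at $A'$ the tuple $s_i$ carries the fresh value $v_{i,A'}$ while $t_i^*[A']$ is non-fresh (either $t_i[A']$ itself or a value drawn from $VD^T(A')$), so they disagree at $A'$. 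The analogous $t_i^*$ versus $t_{i'}^*$ comparison for $i\ne i'$ follows the same recipe: at $A_i$ one side is the fresh $v_{i,A_i}$ and the other is either a different fresh value or a non-fresh one. Thus $T\cup\{s_1,\dots,s_m\}$ admits a strongly possible world in which $K$ is a key, witnessing $g_5(K)\le m/|T|=g_3(K)$.
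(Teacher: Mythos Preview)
Your proof is correct and follows essentially the same approach as the paper: invoke Proposition~\ref{prop:removenontotal} to make the removed tuples non-$K$-total, add one fresh tuple per removed tuple, and use the fresh value in a single $\bot$-slot of $K$ to separate the formerly removed tuples. The only cosmetic difference is that the paper uses a single fresh value $z_i$ across all attributes of the $i$th added tuple while you use attribute-dependent fresh values $v_{i,A}$; your case analysis of pairwise $K$-distinctness is more explicit than the paper's, which leaves it as ``not hard to see.''
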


\begin{proof}
Indeed, we can always remove non-total tuples for $g_3$ by Proposition~\ref{prop:removenontotal}.  Let the tuples to be removed  be $U=\{t_1,t_2,\ldots t_u\}$. Assume that $T^*$ is an spWorld of $T\setminus U$, which certifies that $T\setminus U\models sp \left\langle K \right\rangle$ For each tuple $t_i\in U$, we add tuple $t_i'=(z_i,z_i,\ldots ,z_i)$ where $z_i$ is a value that does not occur in any other tuple originally of $T$ or added. The purpose of adding $t_i'$ is twofold. First it is intended to introduce a completely new active domain value for each attribute. Second, their special structure ensures that they will never agree with any other tuple in the spWorld constructed below for the extended instance. Let $t_i"$ be a tuple such that exactly one \nul\ in $K$ of $t_i$ is replaced by $z_i$, any other \nul s of $t_i$ are imputed by values from the original active domain of the attributes. It is not hard to see that tuples in $T^*\cup\{t_1',t_2'\ldots ,t_u'\}\cup\{t_1",t_2"\ldots ,t_u"\}$ are pairwise distinct on $K$.
\end{proof}
According to Proposition~\ref{g3_l_g5} we have $0\le g_3(K)-g_5(K)<1$ and the difference is a rational number. What is not immediate is that for any rational number $0\le\frac{p}{q}<1$ there exist a table $T$ and $K\subseteq R$ such that $g_3(K)-g_5(K)=\frac{p}{q}$ in table $T$. 
\begin{proposition}\label{prop:spklarge}
Let $0\le\frac{p}{q}<1$ be a rational number. Then there exists a table $T$ with an arbitrarily large number of rows and $K\subseteq R$ such that $g_3(K)-g_5(K)=\frac{p}{q}$ in table $T$. 
\end{proposition}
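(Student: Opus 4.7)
The plan is as follows. For each rational $p/q \in [0,1)$ in lowest terms I will construct a base block of tuples whose $g_3-g_5$ difference equals $p/q$; taking $N$ disjoint copies (with pairwise disjoint active domains) then produces tables of arbitrarily large size without disturbing the ratio, because minimum deletions and minimum additions both decompose over the disjoint components. The case $p=0$ is handled by any table on which $\spk{K}$ already holds (for instance several total tuples with distinct key values), so I assume $p\ge 1$ in what follows.

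For the base block I work with schema $R=\{A,B\}$ and $K=R$. Setting $d=q-p$, $k_0=\lfloor p/d\rfloor+1$, $c=k_0+1$, $t=dc-k_0$ and $s=pc+k_0$ gives $|T|=s+t=qc$. The block consists of the $t$ total tuples $(1,1),(1,2),\ldots,(1,t)$ together with $s$ copies of $(\bot,\bot)$. Then $VD_1=\{1\}$ and $VD_2=\{1,\ldots,t\}$, and the product $VD_1\times VD_2$ is exhausted by the totals, so no $(\bot,\bot)$ copy admits any sp-extension. Hence $g_3=s/(s+t)$, because all $s$ copies must be removed (removing totals instead only shrinks the active domain by as much as it saves). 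For $g_5$, adding $k$ new tuples is most efficient when each uses two fresh values, yielding $|VD_1|=k+1$, $|VD_2|=t+k$, and $(k+1)(t+k)-(t+k)=k(k+t)$ pairs available for the $(\bot,\bot)$ extensions; thus $g_5=k^{\ast}/(s+t)$ where $k^{\ast}$ is the least integer with $k(k+t)\ge s$.

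What remains is to verify that $k^{\ast}=k_0$, which then gives $g_3-g_5=(s-k_0)/(s+t)=pc/(qc)=p/q$. Substituting the values of $s$ and $t$, the two required conditions reduce to (i) $k_0(dc-1)\ge pc$ and (ii) $(k_0-1)(dc-1)<pc+k_0$. For (i), observe $dk_0=d\lfloor p/d\rfloor+d\ge p+1$ (the condition $d\nmid p$ unless $d=1$ follows from $\gcd(p,d)=\gcd(p,q)=1$, and the borderline case $d=1$ is verified directly), combined with $c\ge k_0$ by construction. For (ii), $(k_0-1)d=d\lfloor p/d\rfloor\le p$ yields $(k_0-1)(dc-1)\le pc-(k_0-1)<pc+k_0$. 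The hard part I expect is the optimality claim behind the formula for $g_5$: one must confirm that no alternative insertion strategy (sharing values among the added tuples, or reusing existing active-domain values) produces more than $k(k+t)$ available pairs, which reduces to maximising $(1+a_1)(t+a_2)$ over $0\le a_1,a_2\le k$ with $a_1+a_2\le 2k$, attained at $a_1=a_2=k$.
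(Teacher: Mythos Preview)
Your base block over two attributes is correct and actually proves something stronger than the proposition asks for: with a fixed two-column schema you already realise $g_3-g_5=p/q$, which is essentially the content of the paper's subsequent Theorem~\ref{thm:spkbounded}. The paper's own proof of the present proposition is far simpler but pays for this with unbounded width: it takes a $q\times(p+2)$ table whose first $q-p-1$ rows are $K$-total and pairwise distinct, while each of the last $p+1$ rows carries a single $\bot$ in a different one of the first $p+1$ columns; since those columns have active domain $\{1\}$, one must delete $p+1$ rows, so $g_3=(p+1)/q$, whereas one added row introducing a fresh value in every column gives $g_5=1/q$, and arbitrary size is obtained by passing from $(p,q)$ to $(cp,cq)$.

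There is, however, a genuine gap in your scaling step. Taking $N$ value-disjoint copies of the block does \emph{not} decompose the problem, because a tuple that is $(\bot,\bot)$ on $K$ is weakly similar to \emph{every} element of $VD_1\times VD_2$, including pairs that mix values from different copies; hence in the $K$-extension graph all the $(\bot,\bot)$ tuples lie in a single component together with all of $VD_1\times VD_2$. Concretely, for $p/q=1/3$ your block has $t=3$ totals and $s=3$ all-null rows; two disjoint copies give $|VD_1|=2$, $|VD_2|=6$, hence $12$ pairs of which only $6$ are used by totals, so all six $(\bot,\bot)$ rows can be sp-extended and $g_3=0$, not $1/2$. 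Your claim that deletions and additions ``decompose over the disjoint components'' is therefore false.

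The repair is already latent in your own calculation: condition~(ii) holds for every $c$, and condition~(i) rewrites as $c(dk_0-p)\ge k_0$ with $dk_0-p\ge 1$, so (i) holds for all $c\ge k_0$; together with $t=dc-k_0\ge 1$ this gives that any $c\ge k_0+1$ works. Thus instead of stacking copies you may simply let $c$ range over all integers $\ge k_0+1$, obtaining a single block of $qc$ rows with the required difference; this yields arbitrarily many rows while keeping two columns, and in fact gives a unified proof of the paper's Theorem~\ref{thm:spkbounded} without its three-way case split.
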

\begin{proof}
We may assume without loss of generality that $K=R$, since $T'\models sp \left\langle K \right\rangle$ if and only if we can make the tuples pairwise distinct on $K$ by imputing values from the active domains, that is values in $R\setminus K$ are irrelevant. Let $T$ be the following $q\times(p+2)$ table (with $x=q-p-1$).
\begin{equation}\label{eq:spklarge}
  T=\begin{array}{rl}
  \left.  \begin{array}{p{7pt}p{7pt}p{7pt}cc}
     1   & 1&1&\ldots &1 \\
     1   & 1&1&\ldots &2\\
     & &\vdots& & \\
     1   & 1&1&\ldots &x
   \end{array} \right\}   & q-p-1 \\
   \left. \begin{array}{ccccc}
     \bot    &1&\ldots&1&1  \\
      1   & \bot&\ldots&1&1  \\
      & &\ddots& & \\
      1&1&\ldots&\bot&1
    \end{array} \right\}  & p+1
  \end{array}
\end{equation}
Since the active domain of the first $p+1$ attributes is only $\{1\}$, we have to remove $p+1$ rows so $g_3(K)=\frac{p+1}{q}$. On the other hand it is enough to add one new row $(2,2,\ldots ,2,q-p)$ so $g_5(K)=\frac{1}{q}$. Since $\frac{p}{q}=\frac{cp}{cq}$ for any positive integer $c$, the number of rows in the table could be arbitrarily large.
\end{proof}
The tables constructed in the proof of Proposition~\ref{prop:spklarge} have an arbitrarily large number of rows, however, the price for this is that the number of columns is also not bounded. The question arises naturally whether there are tables with a fixed number of attributes but with an arbitrarily large number of rows  that  satisfy $g_3(K)-g_5(K)=\frac{p}{q}$ for any rational number $0\le\frac{p}{q}<1$? The following theorem answers this problem.
\begin{theorem}\label{thm:spkbounded}
Let $0\le\frac{p}{q}<1$ be a rational number. Then there exist tables over schema $\{A_1,A_2\}$ with arbitrarily large number of rows, such that $g_3(\{A_1,A_2\})-g_5(\{A_1,A_2\})=\frac{p}{q}$.
\end{theorem}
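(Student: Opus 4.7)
The plan is to exhibit, for every rational $0 \le p/q < 1$ and every sufficiently large integer $c$, an explicit table $T_c$ over schema $\{A_1,A_2\}$ with $|T_c| = cq(cq-1)$ and $g_3 - g_5 = p/q$; letting $c$ grow then yields tables with arbitrarily many rows. The construction partitions $T_c$ into $cq$ disjoint blocks indexed by pairwise distinct $A_1$-values $a_1,\dots,a_{cq}$. Block $i$ will contain $cp$ identical copies of $(a_i,\bot)$ together with the $c(q-p)-1$ total rows $(a_i,1),(a_i,2),\dots,(a_i,c(q-p)-1)$. To guarantee at least one total per block, $c\ge 2$ is required when $q-p=1$; otherwise $c\ge 1$ suffices. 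The case $p=0$ is trivial since then no \nul-bearing rows appear.

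First I would verify the active domains: $VD_1 = \{a_1,\dots,a_{cq}\}$ and $VD_2 = \{1,2,\dots,c(q-p)-1\}$. Then I would compute $g_3$ by the observation that in any candidate spWorld every tuple $(a_i,y)$ with $y \in VD_2$ is already occupied by a total in block $i$, so no copy of $(a_i,\bot)$ can be imputed. Hence every non-total row has to be removed; there are $cq\cdot cp = c^2pq$ of them, whence $g_3 = c^2pq/|T_c| = cp/(cq-1)$.

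For $g_5$, the upper bound $g_5 \le cp/|T_c|$ is given by an explicit addition: introduce $cp$ new rows $(x_m,y_m)$ with fresh pairwise distinct values, after which the $m$-th copy of $(a_i,\bot)$ in each block $i$ is imputed to $(a_i,y_m)$, producing a valid spWorld. The matching lower bound is the delicate part: for any $S$ such that $T_c \cup S$ satisfies $\spk{\{A_1,A_2\}}$, I would argue that in any certifying spWorld each block $i$ requires $cp$ distinct ``new'' values from $VD_2(T_c\cup S)\setminus VD_2(T_c)$, because the original $VD_2$ is already exhausted by the totals of block $i$. Since every tuple in $S$ contributes at most one $A_2$-value to the active domain, $|S|\ge cp$; this counting rules out seemingly cleverer additions containing \nul s or reused entries.

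The closing computation is routine: $g_3 - g_5 = \frac{cp}{cq-1} - \frac{cp}{cq(cq-1)} = \frac{cp(cq-1)}{cq(cq-1)} = \frac{p}{q}$, and $|T_c|=cq(cq-1)\to\infty$ as $c\to\infty$, so arbitrarily large tables are produced. The only subtle point in the whole argument is the lower bound on $g_5$, which reduces to counting the new $A_2$-values supplied by the tuples in $S$ and observing that each block can consume only such new values.
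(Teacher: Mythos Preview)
Your construction is correct and in fact cleaner than the paper's. The key steps all check out: for $g_3$, since every tuple of $T_c$ has a total $A_1$-entry, any spWorld of a subtable $T_c\setminus S$ embeds into $VD_1^{T_c\setminus S}\times VD_2^{T_c\setminus S}\subseteq VD_1^{T_c}\times VD_2^{T_c}$, which has only $cq\cdot(c(q-p)-1)$ elements; hence $|S|\ge c^2pq$, and removing all non-total rows attains this. For $g_5$, your counting argument is exactly right: each block already consumes all of $VD_2^{T_c}$ on its total rows, so the $cp$ copies of $(a_i,\bot)$ need $cp$ \emph{new} $A_2$-values, and each added tuple supplies at most one; the explicit addition of $(x_m,y_m)$ shows $cp$ suffices. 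The final arithmetic $g_3-g_5=\frac{cp}{cq-1}-\frac{cp}{cq(cq-1)}=\frac{p}{q}$ is correct.

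The paper takes a quite different route: it constructs tables with exactly $q$ rows (for scalable $q$) and splits into three cases according to whether $p/q$ is below, equal to, or above $1/2$. In the third case it sets up a table with $b$ total rows and $x$ all-$\bot$ rows and has to solve for the minimum number $y$ of additions via inequalities $(y+1)(y+b)\ge x+y+b$ and $y(y-1+b)<x+y-1+b$, then argue that for large enough scaling factor $c$ one can take $y=\lceil p/(q-p)\rceil$. Your block construction sidesteps all of this: by saturating each block's $A_2$-column with totals you make both $g_3$ and $g_5$ computable by a one-line counting argument, and the difference collapses to $p/q$ \emph{exactly} for every admissible $c$, with no case split and no asymptotic step. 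The price is larger tables ($cq(cq-1)$ rows versus $q$), but that is irrelevant for the statement being proved.
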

\begin{proof}
The proof is divided into three cases according to whether $\frac{p}{q}<\frac{1}{2}$, $\frac{p}{q}=\frac{1}{2}$ or $\frac{p}{q}>\frac{1}{2}$. In each case, the number of rows of the table will be an increasing function of $q$ and one just has to note that $q$ can be chosen arbitrarily large without changing the value of the fraction $\frac{p}{q}$.
\paragraph{Case $\frac{p}{q}<\frac{1}{2}$} Let $T_{<.5}$ be defined as 
\begin{equation}
   T_{<.5}=\begin{array}{rl}
 q-p-1  & \left\{\begin{array}{ccl}
         1  &\text{\ } &1 \\
         1  & &2 \\
         \vdots&&\vdots\\
         1&&q-p-1
      \end{array}\hfil \right.\\
     p+1  & \left\{\begin{array}{ccl}
         \bot&\text{\ }&\hfil\bot\\
         \bot & & \hfil \bot \\
         \vdots&&\hfil \vdots\\
         \bot  && \hfil \bot
      \end{array} \right.
           \end{array} 
\end{equation}
Clearly, $g_3(K)=\frac{p+1}{q}$, as all the tuples with \nul s have to be removed. On the other hand, if tuple $(2,q-p)$ is added, then the total number of active domain combinations is $2\cdot(q-p)$, out of which $q-p$ is used up in the table, so there are $q-p$ possible pairwise distinct tuples to replace the \nul s. Since $\frac{p}{q}<\frac{1}{2}$, we have that $q-p\ge p+1$ so all the tuples in the $q+1$-rowed table can be made pairwise distinct. Thus, $g_3(K)-g_5(K)=\frac{p+1}{q}-\frac{1}{q}$. 
\paragraph{Case $\frac{p}{q}=\frac{1}{2}$} Let $T_{=0.5}$ be defined as 
\begin{equation}
   T_{=.5}=\begin{array}{rl}
      q-p-2  & \left\{\begin{array}{ccl}
         1  &\text{\ }& 1 \\
         1  && 2 \\
         \vdots&&\vdots\\
         1&&q-p-2
      \end{array} \right.\\
      p+2 & \left\{\begin{array}{ccl}
         \bot &\text{\ } & \hfil \bot \\
         \bot  && \hfil \bot\\
         \vdots&&\hfil\vdots\\
         \bot  && \hfil \bot
      \end{array}\right.
   \end{array} 
 \end{equation}

Table $T_{=.5}$  contains all possible combinations of the active domain values, so we have to remove every tuple containing \nul s, so $g_3(K)=\frac{p+2}{q}$. On the other hand, if we add just one new tuple (say $(2,q-p-1)$), then the largest number of active domain combinations is $2\cdot(q-p-1)$ that can be achieved. There are already $q-p-1$ pairwise distinct total tuples in the extended table, so only $q-p-1<p+2$ would be available to replace the \nul s. On the other hand, adding two new tuples, $(2,q-p-1)$ and $(3,q-p)$ creates a pool of $3\cdot(q-p)$ combinations of active domains, which is more than $(q-p-1)+p+2$ that is needed.
\paragraph{Case $\frac{p}{q}>\frac{1}{2}$} Table $T$ is defined similarly to the previous cases, but we need more careful analysis of the numbers.
\begin{equation}\label{eq:spkbd>.5}
   T=\begin{array}{rl}
      b & \left\{\begin{array}{cc}
         1  &\,\, 1 \\
         1  & \,\,2 \\
         \vdots&\,\,\vdots\\
         1&\,\,b
      \end{array} \right.\\
      x & \left\{\begin{array}{cc}
         \bot  & \bot \\
         \bot  & \bot\\
         \vdots&\vdots\\
         \bot  & \bot
      \end{array}\right.
   \end{array} 
\end{equation}
Clearly, $g_3(K)=\frac{x}{x+b}$. Let us assume that $y$ tuples are needed to be added. The maximum number of active domain combinations is $(y+1)(y+b)$ obtained by adding tuples $(2,b+1),(3,b+2),\ldots ,(y+1,y+b)$. This is enough to replace all tuples with \nul s if
\begin{equation}\label{eq:y-lower}
(y+1)(y+b)\ge x+y+b.    
\end{equation}
On the other hand, $y-1$  added tuples are not enough, so 
\begin{equation}\label{eq:y-upper}
y(y-1+b)<x+y-1+b.  
\end{equation}
Since the total number of active domain combinations must be less than the tuples in the extended table. We have $\frac{p}{q}=g_3(K)-g_5(K)=\frac{x-y}{x+b}$ that is for some positive integer $c$ we must have $cp=x-y$ and $cq=x+b$ if $gcd(p,q)=1$. This can be rewritten as 
\begin{equation}\label{eq:xyb-cpq}
    y=x-cp \; ;\quad y+b=c(q-p) \; ;\quad b=cq-x \; ;\quad x+y+b=y+cq .
\end{equation}
Using (\ref{eq:xyb-cpq}) we obtain that (\ref{eq:y-lower}) is equivalent with 
\begin{equation}\label{eq:ylowceil}
    y\ge\frac{cp}{c(q-p)-1}.
\end{equation}
If $c$ is large enough then $\lceil \frac{cp}{c(q-p)-1}\rceil=\lceil\frac{p}{q-p}\rceil$ so if $y=\lceil\frac{p}{q-p}\rceil$ is chosen then (\ref{eq:ylowceil}) and consequently (\ref{eq:y-lower}) holds. On the other hand, (\ref{eq:y-upper}) is equivalent to 
\begin{equation}\label{eq:y-upp-pq}
    y<\frac{cq-1}{c(q-p)-2}.
\end{equation}
The right hand side of (\ref{eq:y-upp-pq}) tends to $\frac{q}{q-p}$ as $c$ tends to infinity. Thus, for large enough $c$ we have $\lfloor \frac{cq-1}{c(q-p)-2}\rfloor=\lfloor \frac{q}{q-p}\rfloor$. Thus, if 
\begin{equation}\label{eq:pq-pqq-p}
y=\lceil\frac{p}{q-p}\rceil\le\lfloor \frac{q}{q-p}\rfloor
\end{equation}
and $\frac{q}{q-p}$ is not an integer, then both (\ref{eq:y-lower}) and (\ref{eq:y-upper}) are satisfied for large enough $c$. Observe that $\frac{p}{q-p}+1=\frac{q}{q-p}$, thus (\ref{eq:pq-pqq-p}) always holds. Also, if $\frac{q}{q-p}$ is indeed an integer, then we have strict inequality in (\ref{eq:pq-pqq-p}) that implies (\ref{eq:y-upp-pq}) and consequently (\ref{eq:y-upper}). 
\end{proof}

\section{spFD Approximation}\label{spfdSec}

In this section, we measure to which extent a table satisfies a Strongly Possible Functional Dependency (spFD) $X\spfd Y$  if $T \not\models X\spfd Y$. 

Similarly to Section~\ref{spkeySec}, we assume that the $X$-total part of the table satisfies the FD $X\fd Y$, so we can always consider adding tuples. The measures $g_3$ and $g_5$ are defined analogously to the spKey case.

\begin{definition}\label{spfd-approx-rmv}
For the attribute sets $X$ and $Y$, $\sigma:X \spfd Y$ is a remove-approximate strongly possible functional dependency of ratio $a$ in a table $T$, in notation \\ $T \models \approx_a^- X \spfd Y$, if there exists a set of tuples $S$ such that the table $T\setminus S \models X \spfd Y$, and $|S|/|T|\le a$. Then, $g_3(\sigma)$ is the smallest $a$ such that $T \models \approx_a^-\sigma$ holds. 
\end{definition} 
The measure $g_3(\sigma)$ represents the approximation which is the ratio of the number of tuples needed to be removed over the total number of tuples so that $T \models X \spfd Y$ holds.
\begin{definition}\label{spfd-approx-add}
For the attribute sets $X$ and $Y$, $\sigma:X \spfd Y$ is an add-approximate strongly possible functional dependency of ratio $b$ in a table $T$, in notation $T \models \approx_b^+X \spfd Y$, if there exists a set of tuples $S$ such that the table $T\cup S \models X \spfd Y$, and $|S|/|T|\le b$. Then, $g_5(\sigma)$ is the smallest $b$  such that $T \models \approx_b^+\sigma$ holds. 
\end{definition} 
The measure $g_5(\sigma)$ represents the approximation which is the ratio of the number of tuples needed to be added over the total number of tuples so that $T \models X \spfd Y$ holds.
For example, consider Table \ref{fig:spfd_main}. We are required to remove at least 2 tuples so that $X\spfd Y$ holds, as it is easy to check that if we remove only one tuple, then $T \not\models X \spfd Y$, but on the other hand, the table obtained by removing tuples 4 and 5, shown in Table \ref{fig:spfd_rmv}  satisfies $X\spfd Y$. It is enough to add only one tuple to satisfy the dependency as the table in Table \ref{fig:spfd_add} shows.

\begin{minipage}[c]{0.33\textwidth}
	\centering
        \begin{tabular}{cc|c}
        			\hline
\multicolumn{2}{c|}{\textbf{X}} & \textbf{Y} \\
$X_1$              & $X_2$        &  
\\\hline
$\bot$ & 1      & 1 \\
2      & $\bot$ & 1 \\
2      & $\bot$ & 1 \\
2      & 1      & 2 \\
2      & 1      & 2 \\
2      & 2      & 2
\\ \hline
\end{tabular}
        		\captionof{table}{Incomplete Table \\to measure ($X\spfd Y$)⟩}
                \label{fig:spfd_main}
\end{minipage}
\begin{minipage}[c]{0.33\textwidth}
            	\centering
            		\begin{tabular}{cc|c}
        			\hline
\multicolumn{2}{c|}{\textbf{X}} & \textbf{Y} \\
$X_1$              & $X_2$        &  
\\\hline
$\bot$ & 1&1\\
2      & $\bot$ & 1 \\
2      & $\bot$ & 1 \\
2&2&2
\\ \hline
\end{tabular}
        		\captionof{table}{The table after\\ removing ($_a^-X \spfd Y$)}
                \label{fig:spfd_rmv}
            	
            	\end{minipage}
\begin{minipage}[c]{0.33\textwidth}
            	\centering
            		\begin{tabular}{cc|c}
        			\hline
\multicolumn{2}{c|}{\textbf{X}} & \textbf{Y} \\
$X_1$              & $X_2$        &  
\\\hline
$\bot$ & 1      & 1 \\
2      & $\bot$ & 1 \\
2      & $\bot$ & 1 \\
2      & 1      & 2 \\
2      & 1      & 2 \\
2      & 2      & 2 \\
3      & 3      & 3
\\ \hline
\end{tabular}
        		\captionof{table}{The table after\\ adding ($_b^+X \spfd Y$)}
                \label{fig:spfd_add}

\end{minipage}

\subsection{The Difference of g3 and g5 for spFDs}

The same table may get different approximation measure values for $g_3$ and $g_5$. For example, the $g_3$ approximation measure for Table \ref{fig:spfd_main} is 0.334 (it requires removing at least 2 tuples out of 6), while its $g_5$ approximation measure is 0.167 (it requires adding at least one tuple with new values). 

The following theorem proves that it is always true that the $g_3$ measure value of a table is greater than or equal to the $g_5$ for spFDs.

\begin{theorem} \label{g3_geq_g5}
Let $T$ be a table over schema $R$, $\sigma: X\spfd Y$ for some $X,Y\subseteq R$. Then $g_3(\sigma) \geq g_5(\sigma)$.
\end{theorem}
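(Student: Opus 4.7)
The plan is to adapt the proof of Proposition~\ref{g3_l_g5} for spKeys, replacing the removal of each violating tuple with the addition of a single fresh tuple that extends the active domains. The key observation is that the imputation of the would-be-removed tuple can be designed to avoid any FD violation, either by making its $X$-value distinct from every other tuple (when enough degrees of freedom are available in $X$) or by matching both the $X$ and $Y$ values of the paired added tuple.

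First, I would establish that a minimum removal set $U=\{t_1,\ldots,t_u\}$ with $|U|/|T|=g_3(\sigma)$ may be taken to consist entirely of non-$X$-total tuples, by an argument analogous to Proposition~\ref{prop:removenontotal} using the standing assumption that the $X$-total part of $T$ satisfies $X\spfd Y$. Fix an spWorld $T^\ast$ of $T\setminus U$ witnessing $T\setminus U\models X\spfd Y$.

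For each $t_i\in U$, introduce fresh symbols not occurring in $T$ nor in previously constructed tuples, then add a tuple $t_i'$ and impute $t_i''$ from $t_i$. When $|X|\ge 2$ and $t_i$ has at least one non-\nul\ value in $X$, the spKey-style choice $t_i'=(z_i,z_i,\ldots,z_i)$ paired with $t_i''$ imputing one \nul\ of $t_i$ in $X$ by $z_i$ and the remaining \nul s by original active-domain values makes $t_i'$ and $t_i''$ disagree on some attribute of $X$. Otherwise, let $t_i'$ place a single fresh symbol $w_i$ in every attribute of $X$ and, on $Y\setminus X$, copy $t_i$'s non-\nul\ entries while using shared fresh symbols on the \nul\ entries, and impute $t_i''$ so as to coincide with $t_i'$ on $Y$. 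In all cases, cross-pairs involving $T^\ast$ or tuples with distinct indices $i\ne j$ are separated on some attribute of $X$ by the uniqueness of the fresh symbols, so the extended spWorld $T^\ast\cup\{t_1',\ldots,t_u'\}\cup\{t_1'',\ldots,t_u''\}$ satisfies $X\fd Y$, yielding $g_5(\sigma)\le |U|/|T|=g_3(\sigma)$.

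The main obstacle is the second sub-case above, which has no analog in Proposition~\ref{g3_l_g5} (stated only for $|K|\ge 2$). Its resolution requires coupling the fresh symbols placed in $t_i'[Y]$ with the imputation of $t_i[Y]$, so that $t_i'$ and $t_i''$ agree on $Y$ whenever they happen to agree on $X$; the standing assumption that the $X$-total part already satisfies the FD is what allows this matching to be carried out without creating new violations against tuples in $T^\ast$.
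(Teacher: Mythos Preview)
Your first step---reducing to a minimum removal set $U$ consisting entirely of non-$X$-total tuples---is false for spFDs, and the paper explicitly warns against exactly this move. Table~\ref{fig:totalremove} is a counterexample: the $X$-total part (rows 3--5) does satisfy $X\spfd Y$, yet the \emph{unique} minimum removal set is the singleton $\{t_3\}$, and $t_3$ is $X$-total. Removing any single non-$X$-total tuple (row 1 or row 2) leaves a table in which the surviving non-$X$-total tuple has only the $X$-extensions $(1,1)$ and $(1,2)$ available, both of which collide on $Y$ with rows 3 and 5. The swap argument from Proposition~\ref{prop:removenontotal} does not transfer: for keys, an uncovered $K$-total tuple has a unique neighbour in the extension graph, so one can always trade it for the non-total tuple currently matched there; for FDs there is no analogous matching structure, and an $X$-total tuple may have to be removed precisely to free its $X$-value for several non-$X$-total tuples that have nowhere else to go.

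Because of this, the paper's proof does \emph{not} assume $U$ is non-$X$-total. Instead it splits $U$ into the non-$X$-total tuples $t_1,\ldots,t_a$ (handled essentially as you describe) and the $X$-total tuples $t_{a+1},\ldots,t_p$, grouped by their common $X$-value into $f{+}1$ blocks. For each block only \emph{one} fresh tuple $s_{a+g}$ is added; the removed $X$-total tuples in the block, together with the $X$-total tuples of $T\setminus U$ sharing that $X$-value, are given a common $Y$-extension (here the standing assumption is used), while the non-$X$-total tuples of $T\setminus U$ that the original spWorld $T'$ had sent to that $X$-value are rerouted by inserting the fresh symbol $z_{a+g}$ into their \nul\ positions in $X$. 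This yields $a+f+1\le p$ added tuples. Your proposal, by contrast, collapses once the non-$X$-total assumption on $U$ fails, since every subsequent case split presupposes that each $t_i\in U$ has a \nul\ somewhere in $X$.
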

 The proof is much more complicated than the one in the case of spKeys, because we cannot assume that there always exists a minimum set of non-total tuples to be removed for $g_3$, as the table in Table~\ref{fig:totalremove} shows. In this table the third tuple alone forms a minimum set of tuples to be removed to satisfy the dependency and it has no NULL.

 \begin{table}
     \begin{center}
 	\begin{tabular}{cc|c}
        			\hline
\multicolumn{2}{c|}{\textbf{X}} & \textbf{Y} \\
$X_1$              & $X_2$        &  
\\\hline
1& $\bot$      & 1 \\
1     & $\bot$ & 1 \\
1      & 1 & 2 \\
1      & 1 & $\bot$ \\
1      & 2 & 3\\ \hline
\end{tabular}
\end{center}
\caption{$X$-total tuple needs to be removed}
\label{fig:totalremove}
\end{table}
 From that table, we need to remove the third row to have $X\spfd Y$ satisfied. Let us note that adding row $(3,3,3)$ gives the same result, so $g_3(X\spfd Y)=g_5(X\spfd Y)=1$. However, there exist no spWorlds that realize the $g_3$ and $g_5$ measure values and agree on those tuples that are not removed for $g_3$.

\begin{proof} \textit{of Theorem~\ref{g3_geq_g5}} Without loss of generality, we may assume that $X\cap Y=\emptyset$, because $T\models X\spfd Y\iff T\models X\setminus Y\spfd Y\setminus X$. Also, it is enough to consider attributes in $X\cup Y$.
Let $U = \{t_{1}, t_{2}, \ldots, t_{p}\}$ be a minimum set of tuples to be removed from $T$. Let $T^{\prime}$ be the spWorld of $T\setminus U$ that satisfies $X\fd Y$. Let us assume that $t_1,\ldots t_a$ are such that $t_i[X]$ is not total for $1\le i\le a$. Furthermore, let $t_{a+1}[X]=\ldots =t_{j_1}[X]$, $t_{j_1+1}[X]=\ldots =t_{j_2}[X]$, $\ldots$, $t_{j_f+1}[X]=\ldots =t_{p}[X]$ be the maximal sets of tuples that have the same total projection on $X$. We construct a collection of tuples $\{s_1,\ldots s_{a+f+1}\}$, together with an spWorld $T^*$ of $T\cup\{s_1,\ldots ,s_{a+f+1}\}$  that satisfies $X\fd Y$ as follows.

\paragraph{Case 1.} $1\le i\le a$. Let $z_i$ be a value not occurring in $T$ neither in every tuple $s_j$ constructed so far. Let  $s_i[A]=z_i$ for $\forall A\in X$ and $s_i[B]=t_i[B]$ for $B\in R\setminus X$. The corresponding sp-extensions $s_i^*, t_i^*\in T^*$ are given by setting $s_i^*[B]=t_i^*[B]=\beta$ where $\beta \in VD_B$ arbitrarily fixed if $t_i[B]=\bot$ in case  $B\in R\setminus X$, furthermore $t_i^*[A]=z_i$ if $A\in X$ and $t_i[A]=\bot$. 
\paragraph{Case 2.} $X$-total tuples. For each such set $t_{j_{g-1}+1}[X]=\ldots =t_{j_g}[X]$ ($g\in\{1,2,\ldots ,f+1\}$) we construct a tuple $s_{a+g}$. Let $v_1^g,v_2^g,\ldots v_{k_g}^g\in T\setminus U$ be the tuples whose sp-extension ${v_j^g}'$ in $T'$ satisfies ${v_j^g}'[X]=t_{j_g}[X]$ for $1\le j\le k_g$. 
Let $v_1^g,v_2^g,\ldots v_{\ell}^g$ be those that are also $X$-total. Since the $X$-total part of the table satisfies $X\spfd Y$, $t_{j_{g-1}+1},\ldots t_{j_g},v_1^g,v_2^g,\ldots v_{\ell}^g$ can be sp-extended to be identical on $Y$.
Let us take those extensions in $T^*$. 

 Let $s_{a+g}$ be defined as $s_{a+g}[A]=z_{a+g}$ where $z_{a+g}$ is a value not used before for $A\in X$, furthermore $s_{a+g}[B]=v_{\ell+1}^g[B]$ for $B\in R\setminus X$. The sp-extensions are given as  $v_q^{g*}[A]=z_{a+g}$ if $v_q^{g*}[A]=\bot$ and $A\in X$, otherwise $v_q^{g*}[A]={v_q^g}'[A]$ for $\ell+1\le q\le k_g$.  Finally, let $s_{a+g}^*[B]={v_1^g}'[B]$ for $B\in R\setminus X$.

For any tuple $t\in T\setminus U$ for which no sp-extension has been defined yet, let us keep its extension in $T'$, that is let $t^*=t'$.
\paragraph{Claim} $T^*\models X\spfd Y$.  Indeed, let $t^1,t^2\in T\cup\{s_1,\ldots ,s_{a+f+1}\}$ be two tuples such that their sp-extensions in $T^*$ agree on $X$, that is $t^{1*}[X]=t^{2*}[X]$. If $t^{1*}[X]$ contains a new value $z_j$ for some $1\le j \le a+f+1$, then by definition of the sp-extensions above, we have $t^{1*}[Y]=t^{2*}[Y]$. Otherwise, either both $t^1,t^2$ are $X$-total,  so again by definition of the sp-extensions above, we have $t^{1*}[Y]=t^{2*}[Y]$, or at least one of them is not $X$-total, and then $t^{1*}={t^1}'$ and $t^{2*}={t^2}'$. But in this latter case using $T'\models X\spfd Y$ we get $t^{1*}[Y]=t^{2*}[Y]$.
\end{proof}
The values $g_3$ and $g_5$ are similarly independent of each other for spFDs as in the case of spKeys.
\begin{theorem}\label{thm:pq-spfd}
For any rational number $0\le\frac{p}{q}<1$ there exists tables with an arbitrarily large number of rows and bounded number of columns that satisfy $g_3(\sigma)-g_5(\sigma)=\frac{p}{q}$ for $\sigma\colon X\spfd Y$.
\end{theorem}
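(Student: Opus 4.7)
The plan is to closely mimic the construction from Theorem~\ref{thm:spkbounded}, but on the fixed three-attribute schema $R=\{A_1,A_2,A_3\}$ with spFD $\sigma\colon\{A_1,A_2\}\spfd\{A_3\}$. For positive integer parameters $b$ and $x$ to be chosen below, I will build $T$ out of $b$ total rows $(1,k,c_k)$ for $k=1,\ldots,b$ with pairwise distinct fresh values $c_k$, together with $x$ null rows $(\bot,\bot,d_i)$ for $i=1,\ldots,x$ with pairwise distinct fresh values $d_i$ disjoint from all $c_k$'s. The number of attributes is always $3$, while the number of rows $x+b$ can be made arbitrarily large by the choice of parameters.

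I will then show $g_3(\sigma)=x/(x+b)$. The upper bound is immediate: removing every null row leaves totals that are pairwise distinct on $X$, so $\sigma$ holds trivially. For the lower bound, suppose $r$ null rows and $s$ total rows are removed. If at least one null and at least one total survive, then the reduced $A_2$ active domain equals the set of $A_2$-values of the surviving totals (null rows contribute nothing to it), so any surviving null must sp-extend to some $(1,k)$-slot in that set, colliding on $X$ with a surviving total while disagreeing on $A_3$ (since the $d_i$'s and $c_k$'s are disjoint), which violates $\sigma$. The only alternatives — removing all $x$ null rows, or removing every total plus $x-1$ nulls and invoking $ssymb$ — cost $x$ and $b+x-1\ge x$ respectively.

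Next I will show $g_5(\sigma)=y^*/(x+b)$, where $y^*$ is the least integer satisfying $y(b+y+1)\ge x$. Adding $y$ rows with fresh values in every attribute grows the $A_1$ and $A_2$ active domains to sizes $1+y$ and $b+y$, giving $(1+y)(b+y)$ potential $X$-slots. The $b+y$ slots occupied by totals and added rows carry $A_3$-values disjoint from the $d_i$'s and are therefore off-limits for null extensions, \emph{except} that an added row may deliberately be given $A_3=d_i$, ``aligning'' null $i$ with it. Up to $y$ such alignments are possible, and the remaining nulls require distinct non-occupied slots, of which there are exactly $(1+y)(b+y)-(b+y)=y(b+y)$; the feasibility condition becomes $y(b+y)+y\ge x$, i.e., $y(b+y+1)\ge x$, and infeasibility at $y^*-1$ follows from minimality.

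Finally I will pick $x$ and $b$ to satisfy $(x-y^*)/(x+b)=p/q$: writing $x-y^*=cp$ and $x+b=cq$ for a positive integer $c$ and substituting $y=x-cp$, $b=cq-x$ into the two inequalities above reduces to demanding that $x$ lie in an interval whose endpoints tend to $cp+p/(q-p)$ and $cp+1+p/(q-p)$ as $c\to\infty$, so some integer $x$ exists for $c$ large enough, yielding a three-column table with $cq$ rows. The principal obstacle will be the $g_5$ argument: the alignment mechanism delivers an extra additive gain of $y$ over the spKey condition $y(b+y)\ge x$ used in Theorem~\ref{thm:spkbounded}, so I must verify carefully that no subtler addition strategy — reusing existing active-domain values, inserting rows containing $\bot$, or exploiting alignments beyond those counted — ever outperforms the fresh-value addition described above.
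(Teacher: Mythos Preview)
Your proposal is correct and is essentially the paper's own construction: the same three-column table with $b$ totals $(1,k,c_k)$ and $x$ nulls $(\bot,\bot,d_i)$, the same $g_3=x/(x+b)$, and the same feasibility threshold for $g_5$ (your $y(b+y+1)\ge x$ is algebraically identical to the paper's $(y+1)(y+b)\ge b+x=cq$). One remark on the obstacle you flag: ruling out smarter addition strategies at level $y-1$ is in fact the easy direction --- with $y-1$ added tuples one has $|VD_{A_1}|\le y$ and $|VD_{A_2}|\le b+y-1$, hence at most $y(b+y-1)$ possible $X$-values, while the $b+x$ original tuples carry pairwise distinct $Y$-values and therefore need pairwise distinct $X$-values in any spWorld satisfying the FD; no analysis of alignments or $\bot$-insertions is required for the lower bound.
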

\begin{proof}
Consider the following table $T$.
\begin{table}
   \[T=\begin{array}{cc|c}
      \hline
\multicolumn{2}{c|}{\mathbf{X}} &\mathbf{Y}\\

      X_1              & X_2        &  
\\\hline
         1  &1&1\\
         1  & 2& 2 \\
         \vdots&\vdots&\hfil\vdots\hfil\\
         1&b&\hfil b\hfil\\
      
         \bot  &\bot &b+1\\
         \bot  &\bot&b+2\\
         \vdots&\vdots&\vdots\\
         \bot  &\bot& b+x
     \\ \hline
   \end{array} \]
   \caption{$g_3-g_5=\frac{p}{q}$}\label{eq:spfdg3>g5}
\end{table}
We clearly have $g_3(X\spfd Y)=\frac{x}{x+b}$ for $T$ as all tuples with \nul s must be removed. On the other hand, by adding new tuples and so extending the active domains, we need to be able to make at least $x+b$ pairwise distinct combinations of $X$-values. If $y$ tuples are added, then we can extend the active domains to the sizes $|VD_1|=y+1$ and $|VD_2|=y+b$. Also, if $y$ is the minimum number of tuples to be added, then 
\begin{equation}
    g_3(X\spfd Y)-g_5(X\spfd Y)=\frac{x-y}{x+b}=\frac{p}{q}
\end{equation}
if $cp=x-y$ and $cq=x+b$ for some positive integer $c$. From here $y=x-cp$ and $y+b=c(q-p)$
Thus, what we need is 
\begin{equation}\label{eq:g3g5spfdupper}
 (y+1)(y+b)=(y+1)c(q-p)\ge cq   
\end{equation}
and, to make sure that $y-1$ added tuples are not enough,
\begin{equation}\label{eq:g3g5spfdlower}
    y(y+b-1)=y(c(q-p)-1)\le cq-1.
\end{equation}
Easy calculation shows that (\ref{eq:g3g5spfdupper}) is equivalent with $y\ge\frac{p}{q-p}$, so we take $y=\left\lceil\frac{p}{q-p}\right\rceil$. On the other hand, (\ref{eq:g3g5spfdlower}) is equivalent with $y\le\frac{cq-1}{c(q-p)-1}$. Now, similarly to Case~3 of the proof of Theorem~\ref{thm:spkbounded} observe that $\frac{cq-1}{c(q-p)-1}\rightarrow\infty$ as $c\rightarrow\infty$, so, if $c$ is large enough, then (\ref{eq:g3g5spfdlower}) holds.
\end{proof}
\subsection{Semantic Comparison of $g_3$ and $g_5$}

In this section, we compare the $g_3$ and $g_5$ measures to analyze their applicability and usability for different cases. The goal is to specify when it is semantically better to consider adding or removing rows for approximation for both spFDs and spKeys.

Considering the teaching table in Table \ref{add_or_remove_1}, we have the two strongly possible constraints $Semester \, TeacherID \spfd CourseID$ and $\spk{Semester \, TeacherID}$. It requires adding one row so that $asp_a^+ \left\langle Semester \, TeacherID \right\rangle$ = $_a^+ Semester \, TeacherID \spfd CourseID$. But on the other hand, it requires removing 3 out of the 6 rows. Then, it would be more convenient to add a new row rather than removing half of the table, which makes the remaining rows not useful for analysis for some cases.

Adding new tuples to satisfy some violated strongly possible constraints ensures that we make the minimum changes. In addition to that, in the case of deletion, some active domain values may be removed. There are some cases where it may be more appropriate to remove rather than add tuples, however. This is to preserve semantics of the data and to avoid using values that are out of the appropriate domain of the attributes while adding new tuples with new unseen values. For example, Table \ref{add_or_remove_2} represents the grade records for some students in a course that imply the key ($Name$, $Group$) and the dependency  $Points \, Assignment \fd Result$, while both of $\spk{$Name$ $Group$}$ and $Points \, Assignment \spfd Result$ are violated by the table. Then, adding one tuple with the new values (Dummy, 3, 3, Maybe, Hopeless) is enough to satisfy the two strongly possible constraints, while they can also be satisfied by removing the last two tuples. However, it is not convenient to use these new values for the attributes, since they are probably not contained in the intended domains. Hence, removing two tuples is semantically more acceptable than adding one tuple. 

\setlength{\tabcolsep}{7pt}
\renewcommand{\arraystretch}{1.3}

\begin{table}[ht]
	\centering
   \begin{tabular}{ccc}
\hline
\textbf{Semester} & \textbf{TeacherID} & \textbf{CourseID} \\ \hline
First           & 1              & 1           \\
$\bot$         & 1              & 2        \\
First          & 2              & 3       \\
$\bot$          & 2              & 4       \\
First        & 3              & 5      \\
$\bot$          & 3         & 6   
\\ \hline
\end{tabular}
\caption{Incomplete teaching table}
\label{add_or_remove_1}
\end{table}
  \begin{table}[ht]
    \centering
    \begin{tabular}{ccccc}
\hline
\textbf{Name} & \textbf{Group} & \textbf{Points} & \textbf{Assignment} & \textbf{Result} \\ \hline
Bob           & 1              & 2               & Submitted           & Pass            \\
Sara          & 1              & 1               & Not Submitted       & Fail            \\
Alex          & 1              & 2               & Not Submitted       & Fail            \\
John          & 1              & 1               & Submitted           & Pass            \\
$\bot$        & 1              & 1               & $\bot$              & Retake
  \\
Alex          & $\bot$         & 2               & $\bot$              & Retake
\\ \hline
\end{tabular}
\caption{Incomplete course grading table}
\label{add_or_remove_2}
\end{table}

If $g_3$ is much larger than $g_5$ for a table, it is better to add rows than remove them. Row removal may leave only a short version of the table which may not give a useful data analysis, as is the case in Table~\ref{eq:spfdg3>g5}. If $g_3$ and $g_5$ are close to each other, it is mostly better to add rows, but when the attributes' domains are restricted to a short range, then it may be better to remove rows rather than add new rows with "noise" values that are semantically not related to the meaning of the data, as is the case in Table \ref{add_or_remove_2}.
\section{Strongly possible tuple generating integrity constraints}\label{sec:tuple-generating}
\subsection{Missing Values and MVD}
Investigations of strongly possible worlds can be extended for other types of dependencies, like multivalued dependencies and cross joins.
Note that $T\models X\spmvd Y$ if and only if for each value in $X$, all the possible combinations of $Y$ and $T\setminus XY$ are there in some strongly possible world of $T$.
Then, there should be at least $a*b$ tuples for each value $v \in X$, where $a$ and $b$ are the number of distinct values shown on $Y$ and $T \setminus XY$ respectively on those tuples. 
To check if an FD $X\spfd Y$ is valid in a relation $R$, we only have to check its validity in the projection $R[XY]$; the validity does not depend on the values of the other attributes. On the other hand, the validity of $X\spmvd Y$ in $R$ depends on the values of all attributes and cannot be checked in $R[XY]$ only. 

 In \cite{lien1982equivalence}, Lien defined the NMVD over $R$ as a statement $X\mvd Y$ and considers that the left-hand side of the dependency is $X$-total. According to this, MVD $X\mvd Y$ is satisﬁed by a relation $R$, denoted by $R \models X\mvd Y$, if and only if for every two tuples $t_1, t_2$ in the relation, the following is true: if $t_1$ and $t_2$ are $X$-total and $t_1[X] = t_2[X]$, then there is some tuples $t$ such that $t[XY] = t_1[XY]$ and $t[X(R \setminus Y)] = t_2[X(R \setminus Y)]$.
The assumption of having a limited set of values to be replaced with any \nul\ occurrence makes the satisfaction of spMVD different compared with NMVD by Lien. In Figure \ref{g3-g5-analysis}, Table \ref{ta} satisfies the spMVD $X \spmvd Y$, as the spWolrd obtained by replacing the \nul with either 1 or 2 will satisfy the MVD $X \mvd Y$; but it does not satisfy the NMVD $X \mvd Y$, as it violates the NMVD condition. The same for Table \ref{tb}, the spMVD $X \spmvd Y$ is satisfied by replacing the \nul with 2, and the NMVD $X \mvd Y$ condition is violated. On the other hand, Table \ref{tc} violates the $X \spmvd Y$ and satisfies the NMVD $X \mvd Y$. These cases show that the spMVDs and the NMVDs are independent of each other where a table may satisfy the NMVD and violates the spMVD or vice versa. 
\begin{figure*}[ht]
\centering
\begin{minipage}{1\textwidth}
	\begin{subfigure}[t]{.25\textwidth}
	\centering
    \begin{tabular}{ccc}
    \hline
    \textbf{X} & \textbf{Y} & \textbf{Z} \\ \hline
    1  & 1 & 1       \\
    1  & 1 & 2       \\
    1  & 1 & $\bot$  \\ \hline
    \end{tabular} 
    \caption{}
    \label{ta}
    \end{subfigure}  
\hfill
	\begin{subfigure}[t]{.25\textwidth}
	\centering
    \begin{tabular}{cccc}
    \hline
    \textbf{X} & \textbf{Y} & \textbf{Z} & \textbf{V}\\ \hline
    2  & 1 & 1 & 1       \\
    2  & 2 & 1 &  2   \\
    2  & 2 & 1 &  1   \\
    2  & 1 & 1 &  $\bot$       \\\hline
    \end{tabular} 
    \caption{}
    \label{tb}
	\end{subfigure} 
\hfill
    \begin{subfigure}[t]{.25\textwidth}
    \centering
    \begin{tabular}{ccc}
    \hline
    \textbf{X} & \textbf{Y} & \textbf{Z} \\ \hline
    1  & 1 & 1       \\
    2  & 2 & 2       \\
    $\bot$  & 3 & 3       \\ \hline
    \end{tabular} 
    \caption{}
    \label{tc}
	\end{subfigure}  
\end{minipage}
\caption{$spMVd$ and $NMVD$ Tables}
\label{g3-g5-analysis}
\end{figure*}
\begin{proposition}\label{spfd-spmvd}
If $T \models X\spfd Y$ then $T \models X\spmvd Y$. 
\end{proposition}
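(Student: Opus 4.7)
The plan is to lift the classical fact that $X\fd Y$ implies $X\mvd Y$ in any total relation to the strongly possible setting, via the strongly possible world that witnesses the spFD.

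First, I would unpack the hypothesis using Definition~\ref{spfd-spkey_def}. Assuming $T\models X\spfd Y$, there exists a strongly possible world $T'$ of $T$ such that $T'\models X\fd Y$, i.e.\ for every $t_1',t_2'\in T'$ with $t_1'[X]=t_2'[X]$ we have $t_1'[Y]=t_2'[Y]$.

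Second, I would apply the classical FD-to-MVD argument directly to the total relation $T'$. Take any $t_1',t_2'\in T'$ with $t_1'[X]=t_2'[X]$. The FD gives $t_1'[Y]=t_2'[Y]$, and therefore $t_1'[XY]=t_2'[XY]$. Choosing $t:=t_2'\in T'$ we immediately obtain
\[
t[XY]=t_2'[XY]=t_1'[XY]\quad\text{and}\quad t[X(R\setminus Y)]=t_2'[X(R\setminus Y)],
\]
which is exactly the witness required for the MVD condition. Hence $T'\models X\mvd Y$.

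Third, since $T'$ is already a strongly possible world of $T$, the same $T'$ witnesses the spMVD by the definition of $\spmvd$, so $T\models X\spmvd Y$. There is no real obstacle: the only point worth noting is that exactly the \emph{same} spWorld witnessing the spFD also witnesses the spMVD, so no separate construction or imputation argument is needed.
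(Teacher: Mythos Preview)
Your proposal is correct and follows exactly the same approach as the paper: take the strongly possible world $T'$ witnessing $X\spfd Y$, invoke the classical fact that $X\fd Y$ implies $X\mvd Y$ in the total relation $T'$, and conclude that the same $T'$ witnesses $X\spmvd Y$. The paper states this in a single sentence without spelling out the FD-to-MVD verification, so your write-up is in fact slightly more detailed than the original.
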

This is true as in the strongly possible world $T^{\prime}$ of $T$ that satisfies $X\fd Y$, then $T^{\prime} \models X \mvd Y$ and then $T \models X \spmvd Y$.

It is possible that $X\spmvd Y$ is
not valid in $R$ but it is valid in  $U$, where $U \subset R$, as in Table \ref{tbl}:
\begin{table}[ht]
    \centering
    \begin{tabular}{cccc}
\hline
\textbf{X} & \textbf{Y} & \textbf{Z} & \textbf{V} \\ \hline
1  & 1 & 1  & 1       \\
$\bot$  & 2 & 1  & 2       \\ \hline
\end{tabular}
\caption{$X\spmvd Y$ holds in $U=R\setminus V$}\label{tbl}
\end{table}
Where $X \spmvd Y$ holds in $R\setminus V$ but not in $R$. Furthermore, $X\spmvd Y$ is valid in $R$ if and only if
$X\spmvd Y\setminus X$ is valid in $R$. This property of spMVD follows from the ordinary MVD properties. 

In contrast with the ordinary MVD's, spMVD's don't provide a necessary and sufficient condition for a relation to be decomposable into two of its projections without loss of information. As it shown in Table \ref{dec} with the spMVD $X\spmvd Y$ that is not possible to be decomposed in a lossless way into $XY$ and $XZ$.
\begin{table}[H]
    \centering
\begin{tabular}{ccc}
\hline
\textbf{X} & \textbf{Y} & \textbf{Z}  \\ \hline
1 & 1 & 1         \\
1 & $\bot$ & 2         \\ \hline
\end{tabular}
\caption{Decomposition violation table}
\label{dec}
\end{table}
\subsection{Multivalued Dependencies Approximation}
This section presents an approximation approach that measures how close an spMVD $X\spmvd Y$ is satisfied in a table $T$ if the ordinary MVD $X\mvd Y$ is voilated by $T$. We use an assumption similar to what we presented in Sections ~\ref{spkeySec} and ~\ref{spfdSec}, where the  total part of the table satisfies the MVD $X\spmvd Y$. And we also employee the $g_3$ and $g_5$ measures so that we remove or add tuples, respectivly, to satisfy the violated spMVD. The two measures, $g_3$ and $g_5$ are defined for spMVD in Definitions ~\ref{spmvd-approx-rmv} and ~\ref{spmvd-approx-add}.
\begin{definition}\label{spmvd-approx-rmv}
    For the attribute sets $X$ and $Y$, $\sigma:X \spmvd Y$ is a remove-approximate strongly possible multivalued dependency of ratio $a$ in a table $T$, in notation \\ $T \models \approx_a^- X \spmvd Y$, if there exists a set of tuples $S$ such that the table $T\setminus S \models X \spmvd Y$, and $|S|/|T|\le a$. Then, $g_3(\sigma)$ is the smallest $a$ such that $T \models \approx_a^-\sigma$ holds.
\end{definition}
\begin{definition}\label{spmvd-approx-add}
    For the attribute sets $X$ and $Y$, $\sigma:X \spmvd Y$ is an add-approximate strongly possible multivalued dependency of ratio $b$ in a table $T$, in notation $T \models \approx_b^+X \spmvd Y$, if there exists a set of tuples $S$ such that the table $T\cup S \models X \spmvd Y$, and $|S|/|T|\le b$. Then, $g_5(\sigma)$ is the smallest $b$  such that $T \models \approx_b^+\sigma$ holds. 
\end{definition}
Where the measures $g_3(\sigma)$ and $g_5(\sigma)$ represent the approximation which is the ratio of the number of tuples
needed to be removed or added, respectively, over the total number of tuples so that the table $T$ satisfies $X \spmvd Y$.

Table ~\ref{fig:MVD$g3moreg5$} shows that we need to remove at least 2 tuples (either the first or the last two tuples) st the the table satisfies $x\spmvd Y$. While adding only one tuple (a tuple with a value "2" on $X$ and $\bot$ on $Y$ and $Z$) is enough to satisfy the constraint, then in this case $g_3 > g_5$. On the other hand, in Table \ref{fig:MVD$g3lessg5$}, it is enough to remove only the last tuple to satisfy the spMVD $X\spmvd Y$ but it required to add at least 2 tuples (2,1,2,1) and (2,2,1,1), then $g_3 < g_5$. And it is also possible that $g_3$ = $g_5$ as it is shown in Table ~\ref{fig:MVD$g3eqg5$} where removeing the last tuple or adding the tuple (1,2,1) is enough. Hence, unlike spFd, where it is always true that $g3 \geq g5$, here in the case of spMVD $g3$ and $g5$ are independent on each other.
\\

\begin{minipage}[c]{0.33\textwidth}
	\centering
        \begin{tabular}{cccc|c|c}
        \hline
        \multicolumn{4}{c|}{\textbf{X}} & \textbf{Y} & \textbf{Z} \\ \hline
$\bot$      & 1      & 1     & 1     & 1          & 1          \\
1      & $\bot$      & 1     & 1     & 1          & 1          \\
1      & 1      & $\bot$     & 1     & 2          & 2          \\
1      & 1      & 1     & $\bot$     & 2          & 3          \\
         \hline
\end{tabular}
        		\captionof{table}{MVD \\ Approximation: $g_3 > g_5$}
                \label{fig:MVD$g3moreg5$}
\end{minipage}
\begin{minipage}[c]{0.33\textwidth}
            	\centering
            		\begin{tabular}{c|c|cc}
\hline
\textbf{X} & \textbf{Y} & \multicolumn{2}{c}{\textbf{Z}} \\ \hline
1          & 1          & 1              & 1             \\
1          & 1          & 2              & 1             \\
2          & 1          & 1              & 1             \\
2          & 2          & 2              &  $\bot$       \\ \hline
\end{tabular}
        		\captionof{table}{MVD \\ Approximation: $g_3 < g_5$}
                \label{fig:MVD$g3lessg5$}	
            	\end{minipage}
\begin{minipage}[c]{0.33\textwidth}
            	\centering
            		\begin{tabular}{c|c|c}
              \hline
        			\textbf{X} & \textbf{Y} & \textbf{Z} \\ \hline
1      & 1      & 1      \\
1      & 1      & 2      \\
1      & 2      & $\bot$  \\
         \hline
\end{tabular}
        		\captionof{table}{MVD \\ Approximation: $g3 = g_5$}
                \label{fig:MVD$g3eqg5$}
\end{minipage}

\subsection{Cross joins}
In the present section, some results on cross joins are collected. 
Hannula et.al. \cite{hannula2022discovery} show that the problem of deciding whether there is a cross join that holds on a given table is not only
NP-complete but W[3]-complete in its most natural parameter, namely its arity. In the proof it is used that if $X$ and $Y$ are given, then checking whether $X \times Y$ holds in table $r$ can easily be done in polynomial time. 

However, if $r$ is incomplete, then the same statement does not hold in general. First, we give a simple case that already illustrates in a sense the reason of possible intractability.
\begin{proposition}\label{prop:spcj-singular}
    Let $R$ be a relational schema, $X,Y\in R$ be attributes, $T$ be an incomplete table over $R$. Let the problem \textbf{Singular spCJ} be defined as\newline
    \textbf{Input:} $R$, $X,Y\in R$ and $T$.\newline
    \textbf{Question:} Does $T\models X\times_{sp} Y$ hold?\newline
    Then \textbf{Singular spCJ} is in P.
\end{proposition}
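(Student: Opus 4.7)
The plan is to reduce \textbf{Singular spCJ} to a single bipartite matching computation. First I would make the following simple but key observation: in any strongly possible world $T'$ of $T$, the active domain on any single attribute $A$ is unchanged, i.e., $VD^{T'}_A = VD^T_A$. Indeed, every non-null entry of $T$ survives into $T'$, and every replacement of a $\bot$ uses a value already in $VD^T_A$, so no value is gained or lost. Apply this to $A=X$ and $A=Y$.

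Consequently, $T \models X\times_{sp} Y$ holds if and only if there exists an spWorld $T'$ of $T$ in which, for every pair $(a,b)\in VD^T_X\times VD^T_Y$, some tuple $t\in T'$ satisfies $t[X]=a$ and $t[Y]=b$. Define a bipartite graph $H=(P,T;F)$ where $P=VD^T_X\times VD^T_Y$ and for $(a,b)\in P$ and $t\in T$ the edge $\{(a,b),t\}$ lies in $F$ iff $t[X]\in\{\bot,a\}$ and $t[Y]\in\{\bot,b\}$. I claim that $T\models X\times_{sp} Y$ is equivalent to the existence of a matching in $H$ that saturates $P$. The ``only if'' direction picks, for each pair $(a,b)$, one witness tuple in $T'$ and matches it to $(a,b)$; distinctness of the witnesses is forced because a single tuple takes one $(X,Y)$-value in $T'$. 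The ``if'' direction takes such a saturating matching, fixes $t[X]$ and $t[Y]$ accordingly on each matched tuple (which is consistent with the edge condition), and then extends to a full spWorld by imputing every remaining $\bot$ (on $X$, $Y$, and on the other attributes) by any value from the corresponding active domain; this is always possible under the paper's blanket assumption that every active domain is nonempty.

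The final step is algorithmic: $|P|\le |T|^2$ and $|F|\le |T|\cdot|P|$ are polynomial in the input size, so $H$ can be constructed in polynomial time, and a maximum matching saturating $P$ (if one exists) can be found in polynomial time by any standard bipartite matching algorithm (e.g.\ Hopcroft--Karp). Checking $|M|=|P|$ decides the problem.

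I do not expect a serious obstacle here; the only subtle point is the active-domain-stability observation, without which one might be tempted to believe the set of pairs to be covered depends on the chosen world. Once that is pinned down, the reduction to bipartite matching is direct and the polynomial-time bound is immediate. The contrast with the harder setting in Hannula et al.~\cite{hannula2022discovery} comes precisely from the single-attribute restriction: with multi-attribute $X$ and $Y$, the number of pairs to be covered is no longer polynomial in the input size, and the same construction breaks down.
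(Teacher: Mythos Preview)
Your proposal is correct and follows essentially the same route as the paper: reduce to a bipartite matching between $T$ and $VD_X^T\times VD_Y^T$ with edges given by weak similarity on $XY$, check for a matching saturating the pair side, and invoke a polynomial-time matching algorithm; your explicit active-domain-stability observation is a welcome justification that the paper leaves implicit. One minor remark: your closing diagnosis of why the multi-attribute case is hard is off---the relevant hardness is the paper's own Theorem~\ref{thm:genspCJ} (not \cite{hannula2022discovery}, which concerns discovery in complete tables), and in that reduction the $Y$-side has only three values, so the obstruction is the clique-partition structure of the weak-similarity graph on a multi-attribute side rather than the size of $VD_X\times VD_Y$.
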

\begin{proof}
 Let the sizes of active domains $VD_X^T$ and $VD_Y^T$ be $x$ and $y$, respectively. A strongly possible world $T'$ satisfies CJ $X \times Y$ if and only if for all $a\in VD_X^T$  and $b\in VD_Y^T$ there exists a tuple $t'\in T'$ such that $t'[X]=a$ and $t'[Y]=b$. Construct a bipartite graph $G=(A,B;E)$ where $A=T$ and $B=VD_X^T\times VD_Y^T$ and $\{t,(a,b)\}\in E$ is an edge iff $t[XY]\sim_w (a,b)$. Clearly the size of $G$ is a polynomial function of the input size and it can be constructed by a single scan of the table $T$. We claim that there exists an spWorld $T'$ such that $T'\models X \times Y$ iff there exists a matching in $G$ covering partite class $B$. Indeed, if $\forall (a,b)\in B \exists t'\in T\colon (t'[X]=a)\land (t'[Y]=b)$, then pick one such tuple $t'$. The edges $\{t,(a,b)\}$ form a matching in $G$ covering $B$, where $t'$ is the extension of $t$ in $T'$. 

 On the other hand, if a matching covering $B$ exists in $G$, then an edge  $\{t,(a,b)\}$ represents a tuple that can be extended to $t'$ with $(t'[X]=a)\land (t'[Y]=b)$. Since $B$ is covered by the matching, every possible value combinations on attributes $X,Y$ occur in the spWorld obtained.

 The existence of a matching covering $B$ can be checked in polynomial time of the input graph $G$, say by the Hungarian Method \cite{lovasz2009matching}.
 \end{proof}
 On the other hand, if $X,Y$ can be arbitrary subsets of the schema $R$, then checking whether $T\models X\times_{sp} Y$ holds is hard. 
 Let the problem \textbf{General spCJ} be defined as\newline
 \textbf{Input:} Schema $R$, sets of attributes $X,Y\subseteq R$ and table $T$ over $R$.\newline
    \textbf{Question:} Does $T\models X\times_{sp} Y$ hold?\newline
Let  the \emph{weak similarity graphs} of  $X$ and $Y$, $G_X^T$ and $G_Y^T$ be defined as $G_X^T=(T,E_X^T)$ (resp.  $G_Y^T=(T,E_Y^T)$ where $\{t_1,t_2\}\in E_X^T)$ iff $t_1[X]\sim_w t_2[X]$ (resp. $\{t_1,t_2\}\in E_Y^T$ iff $t_1[Y]\sim_w t_2[Y]$). If $T'$ is an spWorld of $T$ and for tuples $t_1'[X]=t_2'[X]=\ldots t_k'[X]$ holds, then $t_1,t_2,\ldots t_k$ form a clique (complete subgraph) of the weak similarity graph $G_X^T$ and the same holds for attribute set $Y$. This implies that $T\models X\times_{sp} Y$ holds iff there exists clique partitions $\mathcal{Q}_X$ and $\mathcal{Q}_Y$ of the weak similarity graphs $G_X^T$ and $G_Y^T$, respectively that $\forall Q\in \mathcal{Q}_X\forall Q'\in\mathcal{Q}_Y\colon Q\cap Q'\neq \emptyset$.
 That is, we are given two graphs, and we need a completely crossing pair of clique partitions of them. This problem seems to be hard. However, to prove that \textbf{General spCJ} is NP-complete one does not need the full generality of the clique partition question.
 \begin{theorem}\label{thm:genspCJ}
     The problem \textbf{General spCJ} is NP-complete.
 \end{theorem}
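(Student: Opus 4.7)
Membership in NP is immediate. A strongly possible world $T'$ of $T$ has size $|T|\cdot|R|$, and given $T'$ one can verify $T'\models X\times Y$ in polynomial time by scanning all pairs $t_1,t_2\in T'$ and searching for a witness $t\in T'$ with $t[X]=t_1[X]$ and $t[Y]=t_2[Y]$. So an spWorld serves as a polynomial-size certificate.

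For NP-hardness, my plan is to give a polynomial-time reduction from a standard NP-complete problem such as 3-SAT or graph 3-colourability. The reduction would build a schema $R$ with attribute blocks $X,Y\subseteq R$ of modest arity and a table $T$ whose $\bot$-pattern together with seeded active domains simulate the source instance. Each $\bot$ gives a local choice of value from the active domain of its attribute, while the cross join requirement---that every combination in $\pi_X(T')\times\pi_Y(T')$ be witnessed by some tuple of the spWorld---acts as the global consistency constraint. Equivalently, as the discussion preceding the theorem notes, $T\models X\times_{sp} Y$ holds iff the weak similarity graphs $G_X^T$ and $G_Y^T$ admit a pair of completely crossing clique partitions, and one can engineer these two graphs to encode combinatorial constraints.

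The construction would use two types of gadgets. A \emph{choice gadget} per variable (or vertex) supplies a few tuples whose $\bot$'s force their weak similarity classes in $G_X^T$ to fall into one of two (respectively three) options, representing a truth value (respectively a colour). A \emph{constraint gadget} per clause (or edge) adds tuples restricting how these choices may cross on $Y$, so that the required crossing clique partition is realisable iff the clause is satisfied (iff the endpoints receive different colours). Separator attributes and padding tuples would keep the local active domains of distinct gadgets disjoint and prevent unintended interaction across gadgets.

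The main technical obstacle is the ``only if'' direction: showing that every spWorld realising $X\times Y$ decodes to a satisfying assignment of the source instance. The ``if'' direction is routine once separator attributes are in place, since a satisfying assignment of the source problem yields an explicit imputation producing the required spWorld. The converse forces a careful case analysis of how the forced clique structure on $G_X^T$ and $G_Y^T$ interacts with the constraint gadgets: the gadgets must be rigid enough that every completely crossing pair of clique partitions of $G_X^T$ and $G_Y^T$ corresponds to a genuine solution of the source instance, while remaining polynomial in size and avoiding any spurious imputation that would bypass the encoding.
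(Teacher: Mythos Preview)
Your NP-membership argument is fine and matches the paper's. The NP-hardness part, however, is not a proof but a plan: you describe the \emph{shape} of a reduction (choice gadgets, constraint gadgets, separator attributes) without actually constructing one, and you explicitly flag the ``only if'' direction as an unresolved obstacle. As it stands there is nothing to verify---no concrete table, no argument that the crossing-clique-partition condition is satisfiable iff the source instance is a yes-instance. A referee could not accept this.

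The paper's reduction is much more specific and avoids the open-ended gadget design you anticipate. It reduces from 3-Dimensional Matching via the classical Garey--Johnson gadget that turns a 3DM instance into a graph $G$ such that $G$ has a partition into triangles iff the 3DM instance has a matching. The key device you are missing is Lemma~\ref{lem:allgraphs-are-weaksim}: \emph{any} simple graph on $k$ vertices can be realised as the weak similarity graph $G_X^T$ of a $k$-tuple table over a $k$-attribute block $X$. The paper then takes $Y$ to be a \emph{single} attribute whose active domain is exactly $\{1,2,3\}$, with the $Y$-values assigned to the gadget vertices so that every triangle of $G$ meets all three values. Because all cliques of $G$ have size at most three, the cross join $X\times Y$ forces each distinct $X$-value in an spWorld to appear with all three $Y$-values, which forces the tuples to be partitioned into weak-similarity triangles of $G_X^T$---i.e.\ a triangle partition of $G$, hence a 3DM matching. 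Both directions then fall out in a few lines. The moral is that rather than encoding constraints via elaborate crossing conditions on two graphs, one side ($Y$) is made trivial and the entire combinatorial content is pushed into the shape of $G_X^T$ via Lemma~\ref{lem:allgraphs-are-weaksim}.
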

 The following lemma is needed for the proof of Theorem~\ref{thm:genspCJ}.
 \begin{lemma}\label{lem:allgraphs-are-weaksim}
     Let $G=(V,E)$ be a simple graph with $|V|=k$. Then there exists a table over schema $R$ of $k$ attributes that $G$ is the weak similarity graph of attribute set $R$.
 \end{lemma}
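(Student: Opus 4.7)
The plan is to build a table $T$ with exactly $k$ tuples $t_{v_1},\ldots,t_{v_k}$, one per vertex of $G$, over a schema $R=\{A_1,\ldots,A_k\}$ with one attribute per vertex. Attribute $A_i$ will play the role of a dedicated ``witness column'' for the non-edges of $G$ incident to $v_i$: I place a distinguished symbol at the diagonal entry $t_{v_i}[A_i]$, a contrasting non-null symbol at every row $t_{v_j}$ whose vertex $v_j$ is non-adjacent to $v_i$ in $G$, and $\bot$ at every row $t_{v_j}$ with $\{v_i,v_j\}\in E$. The guiding intuition is that the stars of $\bar G$ form a cover of its edge set by $k$ bicliques, and each such star can be faithfully encoded into a single attribute.

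Concretely, I define
\[
 t_{v_i}[A_j] \;=\;
 \begin{cases}
  0 & \text{if } i=j,\\
  1 & \text{if } i\ne j \text{ and } \{v_i,v_j\}\notin E,\\
  \bot & \text{if } i\ne j \text{ and } \{v_i,v_j\}\in E.
 \end{cases}
\]
For any non-edge $\{v_i,v_j\}\notin E$, column $A_i$ immediately witnesses $t_{v_i}\not\sim_w t_{v_j}$, since $t_{v_i}[A_i]=0$ and $t_{v_j}[A_i]=1$ are distinct and both non-null. Conversely, if $\{v_i,v_j\}\in E$, I verify weak similarity column by column. In column $A_i$ the entry at row $v_j$ is $\bot$ by construction, and symmetrically for $A_j$; in any other column $A_l$ with $l\notin\{i,j\}$ both $t_{v_i}[A_l]$ and $t_{v_j}[A_l]$ belong to $\{1,\bot\}$, because the only $0$ in column $A_l$ sits in row $v_l$. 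Hence no disagreement of non-null values can arise, so $t_{v_i}\sim_w t_{v_j}$ on $R$.

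The verification above is essentially a routine case analysis, and I do not anticipate any deeper obstacle. The only subtle point is to ensure that placing the $1$'s in column $A_i$ to separate $v_i$ from its non-neighbours does not accidentally separate $v_i$ from one of its neighbours; this is exactly what the third clause of the definition prevents by reserving $\bot$ for precisely those rows. The active domain of each attribute $A_i$ is non-empty since $0\in VD(A_i)$, so the construction is a legitimate instance within the paper's setting, which completes the plan.
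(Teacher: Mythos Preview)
Your construction is correct and is essentially the same as the paper's: the paper also takes one attribute per vertex, places a distinguished value on the diagonal, $\bot$ at positions corresponding to edges, and a contrasting value at positions corresponding to non-edges (the paper uses $1$ and $2$ where you use $0$ and $1$). The verification is identical in spirit, so there is nothing substantively different to compare.
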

 \begin{proof}
    Consider $R=\{A_1,A_2,\ldots ,A_k\}$ and define tuple $t_i$ for $i=1,2,\ldots ,k$ as follows. 
    \begin{equation}
        t_i[A_{\ell}]=\left\{\begin{array}{rl}
            1 & \text{if }\ell=i \\
            \bot & \text{if }\{v_i,v_{\ell}\}\in E\\
            2 & \text{if }\{v_i,v_{\ell}\}\not\in E
        \end{array}
        \right.
    \end{equation}
    In fact, attribute $A_i$ shows the non-similarities with tuple $t_i$. It is easy to see that in this table $t_i\sim_w t_{\ell}\iff\{v_i,v_{\ell}\}\in E$. 
 \end{proof}
 \begin{proof}[of Theorem~\ref{thm:genspCJ}]
   Problem  \textbf{General spCJ} is in NP, since an spWorld $T'$ satisfying $T'\models X \times Y$ is a good witness, as checking a given cross join for a complete table can be done in polynomial time.

   To prove that \textbf{General spCJ} is NP-hard we prove the 
   \begin{equation}\label{eq:x3c-spCJ}
       3DM\prec \mathbf{General spCJ}
   \end{equation}
   Karp-reduction. Let an input of 3DM be be given as $\mathcal{F}\subseteq B\times C\times D$ where $|B|=|C|=|D|=q$, they are pairwise disjoint. The question is whether there exists a matching $\mathcal{F'}\subseteq\mathcal{F}$ such that $|\mathcal{F'}|=q$ and no two members of $\mathcal{F'}$ agree in any coordinate? Let $\mathcal{F}=\{(b_i,c_i,d_i)\colon i=1,2,\ldots ,k\}$. Construct graph $G=(V,E)$ as follows. $V=B\cup C\cup D\cup\bigcup_{i=1}^k\{a_i^j\colon j=1,2,\ldots ,9\}$ and $E=\bigcup_{i=1}^kE_i$, where $E_i$is the set of $18$ edges shown on Figure~\ref{fig:3DM2spCJ}. This gadget is taken from \cite{johnson1979computers}. Actually, what is shown in the famous book of Garey and Johnson is that there exists a matching $\mathcal{F'}$ if and only if the vertex set of graph $G$ can be partitioned into triangles (complete graphs of size three). Note that the size of $G$ is $|V|=3q+9k$ and $|E|=18k$ that is a polynomial function of the input size of the 3DM problem. An incomplete table $T$ over schema $R=\{A_1,A_2,\ldots ,A_{|V|},Y\}$ is constructed as follows. The tuples correspond to the vertices of $G$ and the part over $X=\{A_1,A_2,\ldots ,A_{|V|}\}$ is given by Lemma~\ref{lem:allgraphs-are-weaksim} so that the weak similarity graph $G_X$ of $X$ is $G$. On the other hand,  let the $Y$-values of tuples corresponding to vertices of $G$ be given  by $1$ for $b_i,a_i^3,a_i^4,a_i^7$ and $2$ for $c_i,a_i^1,a_i^6,a_i^8$, finally $3$ for $d_i,a_i^2,a_i^5,a_i^7$. If some vertices of $B\cup C\cup D$ are not covered by $\mathcal{F}$, that is they are isolated vertices in $G$, then their corresponding tuples have $Y$ value $1$. Our claim is that there exists a matching $\mathcal{F'}\subseteq\mathcal{F}$ if and only if $T\models X\times_{sp} Y$. 

   Assume first, that there exists a matching $\mathcal{F'}\subseteq\mathcal{F}$ say $(b_1,c_1,d_1),\ldots (b_q,c_q,d_q)$. Then on $X$ take weak similarity cliques $b_ia_i^1a_i^2$, $c_ia_i^4a_i^5$, $d_ia_i^7a_i^8$, $a_i^3a_i^6a_i^9$ for $i=1,\ldots ,q$. Furthermore if $(b_j,c_j,d_j)\in\mathcal{F}$ is not in the matching, then take on $X$ the weak similarity cliques $a_j^1a_j^2a_j^3$, $a_j^4a_j^5a_j^6$, $a_j^7a_j^8a_j^9$. These cliques partition the tuples of table $T$. Extend the three tuples in each clique to the same complete tuple on $X$, respectively. Observe that each clique meets all three possible $Y$ values by construction, so the strongly possible world $T'$ obtained satisfies the cross join $X\times Y$.

   On the other hand, assume now that $T\models X\times_{sp} Y$. Observe that cliques of $G$ have size at most three. Let $T'$ be the spWorld that satisfies the cross join $X\times Y$. That means that each $X$ value occurs in tuple with each three possible $Y$-values. These tuples must form a weak similarity clique on $X$, so a triangle in $G$. That is the tuples of $T$ are partitioned into weak similarity cliques of size three, that corresponds to partitioning of the vertex set of $G$ into triangles. This latter one implies that there is a matching  $\mathcal{F'}\subseteq\mathcal{F}$ by \cite{johnson1979computers}.
\begin{figure}
    \centering
    \input{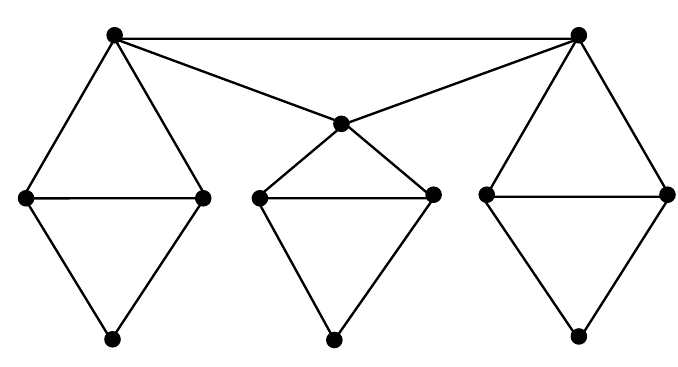_t}
    \caption{Gadget for $3DM\prec \mathbf{General spCJ}$}
    \label{fig:3DM2spCJ}
\end{figure}   
   \end{proof}

\subsection{Cross Join Approximation}
To measure the approximation of $spCJ$, we employ the  measures $g_3$ and $g_5$ to find the minimum number of tuples to remove or add so that all the values combinations of the attribute sets $X$ and $Y$ are there in some strongly possible world. To precisely define $g_3$ and $g_5$ for an spCJ $\sigma\colon X\times_{sp} Y$ one needs only to consider that $X\times_{sp} Y$ is the same as the spMVD $\emptyset\spmvd X$ in the projection of the given table to attribute set $XY$. Note that $g_5$ may take values larger than $1$ in this case. Removing tuples ensures taking off parts of the table where it is not possible to achieve all the strongly possible combinations. Table \ref{fig:$g3moreg5$} shows that removing the last two tuples satisfies the spCJ $TeacherID \times_{sp} CourseID$, as it is not possible in any way to have all the values (1, 2, and 3) on $CourseID$ to appear with the values of $TeacherID$. On the other hand, adding one tuple with a null value for each attribute  satisfies the spCJ. The two measure are independent of each other as it is shown by Table \ref{fig:$g3moreg5$} that $g_3 > g_5$, Table \ref{fig:$g3lessg5$} shows that $g_3 < g_5$, and they can be equal as shown in Table \ref{fig:MVD$g3eqg5$} where $Y\times_{sp}Z$ hols.
\\
\begin{minipage}[c]{0.49\textwidth}
	\centering
        \begin{tabular}{cc}
        			\hline
         TeacherID& CourseID\\ \hline
         1& 1\\
         1& 2\\
         1& 3\\
         2& $\bot$ \\
         2& $\bot$ \\
         \hline
\end{tabular}
        		\captionof{table}{Cross Join Approximation: \\$g_3 > g_5$}
                \label{fig:$g3moreg5$}
\end{minipage}
\begin{minipage}[c]{0.49\textwidth}
            	\centering
            		\begin{tabular}{cc}
        			\hline
         TeacherID& CourseID\\ \hline
         1& 1\\
         1& 2\\
         1& 3\\
         2& $\bot$ \\
         \hline
\end{tabular}
        		\captionof{table}{Cross Join Approximation: \\$g_3 < g_5$}
                \label{fig:$g3lessg5$}	
            	\end{minipage}

The next two theorems show that not only checking the satisfaction of a given spCJ is hard, but finding the approximation measures $g_3$ and $g_5$ are NP-complete.
\begin{theorem}\label{thm:g3spCJ}
    Let problem $\mathbf{spCJ-g_3}$ be defined as follows. \\
    \textbf{Input}: Incomplete table $T$ over schema $R$, attribute sets $X,Y\subseteq R$, rational number $q\in[0,1)$.\\
    \textbf{Question}: Does $g_3(X\times_{sp}Y)\le q$ hold?\\
    Then $\mathbf{spCJ-g_3}$ is NP-complete.
\end{theorem}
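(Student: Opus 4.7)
The plan is to establish membership in NP directly and then reduce from the already-established hardness of \textbf{General spCJ} (Theorem~\ref{thm:genspCJ}). For membership, I would use a certificate consisting of a subset $S\subseteq T$ with $|S|/|T|\le q$, together with a complete table $T^*$ over the attributes $X\cup Y$ such that $T^*$ is an spWorld of the projection $(T\setminus S)[XY]$ and $T^*\models X\times Y$. Verifying that $t^*[A]\in VD^{T\setminus S}_A$ for every attribute $A\in XY$ and checking that all $|VD^{T\setminus S}_X|\cdot |VD^{T\setminus S}_Y|$ value combinations appear in $T^*$ can be done by a linear scan, so the verification runs in polynomial time.

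For NP-hardness, I would give a trivial Karp reduction from \textbf{General spCJ} by setting $q=0$. By definition of $g_3$, the inequality $g_3(X\times_{sp} Y)\le 0$ holds in $T$ if and only if no tuple removal is needed, i.e.\ if and only if $T\models X\times_{sp} Y$. Hence a polynomial-time algorithm for $\mathbf{spCJ}\text{-}g_3$ would decide \textbf{General spCJ}, which is NP-hard by Theorem~\ref{thm:genspCJ}. The rational number $q=0$ lies in the allowed range $[0,1)$, so the reduction is syntactically legal; mapping an instance $(R,X,Y,T)$ of \textbf{General spCJ} to $(R,X,Y,T,0)$ is obviously computable in linear time.

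Since there is essentially no obstacle beyond noting the boundary case $q=0$, the hard part of this theorem is really just recognizing that $g_3$ being zero is exactly the original satisfaction question, so the W[3]-hardness ``baggage'' of \textbf{General spCJ} is inherited for free. One could optionally strengthen the result by showing hardness for some fixed positive threshold $q>0$; for that I would append to the hard instance of \textbf{General spCJ} a block of dummy tuples that are trivially consistent with any cross join (e.g.\ tuples whose $XY$-projection is totally $\bot$ with the active domain already containing all needed values), in a quantity calibrated so that the $g_3$ threshold becomes the desired $q$. But for the statement as given, the $q=0$ reduction suffices and requires no further work.
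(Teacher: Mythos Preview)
Your proof is correct, but it differs from the paper's. The paper proves NP-hardness by a direct Karp reduction from \textbf{MaxClique}: given a graph $G$ on $n$ vertices and integer $k$, it builds a table $T$ of $n$ tuples with $t_i[X]=i$ and with the $Y$-part encoding $G$ as its weak similarity graph via Lemma~\ref{lem:allgraphs-are-weaksim}, then sets $q=1-\frac{k}{n}$; a clique of size $k$ in $G$ corresponds to a set of $k$ tuples that can be made identical on $Y$, hence forming a table satisfying the cross join after the other $n-k$ tuples are removed. Your route instead leverages the already-proved Theorem~\ref{thm:genspCJ} and simply sets $q=0$, which is shorter and entirely valid since $q$ is part of the input. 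What the paper's approach buys is hardness at every rational threshold of the form $1-\frac{k}{n}$ in one shot, whereas your reduction hits only $q=0$ directly (you correctly sketch how padding would extend this). Conversely, your argument is more modular and avoids introducing a new source problem.

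One small inaccuracy: \textbf{General spCJ} is shown NP-complete in Theorem~\ref{thm:genspCJ}, not W[3]-hard; the W[3]-completeness result you allude to is the Hannula et~al.\ result about the \emph{existence} of a satisfying cross join on a complete table, which is a different problem. This does not affect your argument, but the remark should be corrected.
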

\begin{proof}
    To show that $\mathbf{spCJ-g_3}\in$NP the witness is a set $U$ of $k$ tuples to be removed from $T$ and an spWord $T'$ of $T\setminus U$ such that $T'\models X\times Y$ and $\frac{k}{|T|}\le q$.  It is easy to see that this witness can be verified in polynomial time.\\
    To prove that  $\mathbf{spCJ-g_3}$ is NP-hard a Karp-reduction 
    \[
    \mathbf{MaxClique}\prec \mathbf{spCJ-g_3}
    \]
    is presented. Let the input of $\mathbf{MaxClique}$ be graph $G=(V,E)$ and $k\in\mathbb{N}$, with $V=\{v_1,v_2,\ldots ,v_n\}$. Construct a table $T=\{t_1,t_2,\ldots ,t_n\}$ over schema $R=\{X,A_1,A_2,\ldots A_n\}$ as follows. Let 
  $Y=\{A_1,A_2,\ldots A_n\}$,  $t_i[X]=i$ and the restriction $T[Y]$ is constructed by using Lemma~\ref{lem:allgraphs-are-weaksim} so that  the weak similarity graph $G_Y$ is isomorphic to $G$. Furthermore, let $q=1-\frac{k}{n}$. Our claim is that $(G,k)\in \mathbf{MaxClique}\iff g_3(X\times_{sp}Y)\le q$. Indeed, assume  first that $G\cong G_Y$ has a clique $Q\subseteq V$ of size $k$. Then let $T_1=\{t_i\colon v_I\in Q\}$. Since these tuples form a weak similarity clique in $G_Y$, they can be extended to the same value on $Y$. (Note that we might need to use the special symbol $ssymb$ introduced after Definition~\ref{def:spWorld}, as $T_1$ is obtained from $T$ by removing tuples.  Take that spWorld $T_1'$ of $T_1$, it clearly satisfies $T_1'\models X\times Y$. Thus $g_3(X\times_{sp}Y)\le \frac{|V|-|Q|}{|V|}=1-\frac{k}{n}$. 
\end{proof}
\begin{theorem}\label{thm:g5spCJ}
    Let problem $\mathbf{spCJ-g_5}$ be defined as follows. \\
    \textbf{Input}: Incomplete table $T$ over schema $R$, attribute sets $X,Y\subseteq R$, non-negative rational number $q$.\\
    \textbf{Question}: Does $g_5(X\times_{sp}Y)\le q$ hold?\\
    Then $\mathbf{spCJ-g_5}$ is NP-complete.
\end{theorem}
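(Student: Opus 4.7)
The plan is to establish NP-completeness by first proving NP-membership and then reducing the NP-complete problem \textsc{Clique Cover} to $\mathbf{spCJ-g_5}$. For membership, a witness is a set $S$ of added tuples with $|S| \le q|T|$ together with an spWorld $T^*$ of $T \cup S$; verifying $T^* \models X \times Y$ on a complete table is polynomial. For hardness, given $(G, k)$ with $G = (V, E)$, $|V| = n$, I would build the table $T = \{t_1, \ldots, t_n\}$ over schema $R = \{X, A_1, \ldots, A_n\}$ by setting $t_i[X] = i$ and defining the projection onto $Y = \{A_1, \ldots, A_n\}$ via Lemma~\ref{lem:allgraphs-are-weaksim}, so that the weak similarity graph $G_Y^T$ on $Y$ is isomorphic to $G$. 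Taking $q = k - 1$, it suffices to prove $g_5(X \times_{sp} Y) = c^*(G) - 1$, where $c^*(G)$ denotes the minimum clique cover number of $G$.

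For the upper bound, start from a minimum clique cover $Q_1, \ldots, Q_{c^*}$ of $G$ and assign each $Q_i$ a common complete $Y$-value $\tilde y_i$, defined by $\tilde y_i[A_\ell] = 1$ when $v_\ell \in Q_i$, $\tilde y_i[A_\ell] = 2$ when some $v_u \in Q_i$ is non-adjacent to $v_\ell$, and any active-domain value otherwise. Pairwise weak similarity on $Y$ within $Q_i$ ensures the non-$\bot$ entries agree, so $\tilde y_i$ is well-defined. Minimality of the cover forces that for every pair $Q_i \neq Q_j$ some non-edge $\{v, w\}$ with $v \in Q_i$, $w \in Q_j$ exists (otherwise the two cliques could be merged), and on $A_w$ the value is $2$ for $\tilde y_i$ while $1$ for $\tilde y_j$, distinguishing them. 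For each $Q_i$, add $n - |Q_i|$ tuples whose $X$-coordinate runs over the $X$-values missing from $\{j : v_j \in Q_i\}$ and whose $Y$-coordinates are $\bot$; each such added tuple can be extended to $\tilde y_i$. The total number of added tuples is $\sum_i (n - |Q_i|) = n(c^* - 1)$.

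For the lower bound, take any extension $T \cup S$ with spWorld $T^*$ satisfying $X \times Y$, having $m$ distinct $X$-values and $k$ distinct $Y$-values. The cross-join condition forces $|T^*| \ge mk$. Since the original tuples already carry $n$ pairwise distinct $X$-values, $m \ge n$. The $Y$-classes of $T^*$ partition the tuples into weak-similarity cliques on $Y$; restricted to the original tuples, the nonempty restrictions form a clique partition of $G_Y^T \cong G$, so $k \ge c^*(G)$. Combining, $|S| = |T^*| - n \ge mk - n \ge n(c^* - 1)$, hence $g_5 \ge c^* - 1$, matching the upper bound. Introducing fresh $X$-values only inflates $m$ and therefore cannot improve this bound. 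The main obstacle will be justifying the well-definedness and distinctness of the $\tilde y_i$'s in the upper-bound construction, which relies on the no-merge property of minimum clique covers together with the specific encoding provided by Lemma~\ref{lem:allgraphs-are-weaksim}.
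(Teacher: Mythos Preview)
Your proof is correct and follows essentially the same route as the paper, just packaged differently. The paper reduces from \textsc{3-Coloring}, building $T$ so that the \emph{complement} of the weak similarity graph $G_Y^T$ is the input graph and taking $q=2$; you reduce from \textsc{Clique Cover}, building $T$ so that $G_Y^T$ itself is the input graph and taking $q=k-1$. Since 3-colorability of $G$ is exactly clique-coverability of $\bar G$ by three cliques, the two reductions are the same construction viewed from opposite sides. Your version is a touch sharper in that it pins down the exact value $g_5(X\times_{sp}Y)=c^*(G)-1$ rather than only testing a fixed threshold; the paper's argument implicitly does the same counting ($|U'|\ge |U'[X]|\cdot|U'[Y]|\ge n\cdot k$) but only states the $k=3$ case.

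One small point worth tightening (and the paper glosses over it too): for NP-membership you should argue that the witness $S$ can be taken of polynomial size. As stated, $q$ is an arbitrary non-negative rational in the input, so $q|T|$ is not a priori polynomial. The fix is easy: for any $T$, extending the original tuples arbitrarily gives at most $|T|$ distinct $X$-values and at most $|T|$ distinct $Y$-values, so adding at most $|T|^2$ all-$\bot$ tuples suffices to realise the cross join, whence $g_5\le|T|$. Thus if $q\ge|T|$ the answer is trivially YES, and otherwise $|S|\le q|T|<|T|^2$ is polynomial.

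Your verification that the common extensions $\tilde y_i$ are well-defined and pairwise distinct (via the no-merge property of a minimum clique cover and the explicit encoding of Lemma~\ref{lem:allgraphs-are-weaksim}) is correct; note, however, that distinctness is only needed to match the lower bound exactly---the upper bound $g_5\le c^*-1$ goes through regardless, since duplicated $Y$-values only make the cross join easier to satisfy.
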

\begin{proof}
    The witness that $\mathbf{spCJ-g_5}\in$NP consists an (incomplete) table $T\subseteq U$ and an spWorld $U'$ of $U$ such that $U'\models X\times Y$ and $\frac{|U|-|T|}{|T|}\le q$. It is easy to see that the witness can be checked in polynomial time. To prove that it is NP-hard, a Karp-reduction
    \[
    \mathbf{3-Color}\prec \mathbf{spCJ-g_5}
    \]
    is presented. Let simple graph $G=(V,E)$ be the input of $\mathbf{3-Color}$, with $V=\{v_1,v_2,\ldots ,v_n\}$. Create a table $T=\{t_1,t_2,\ldots ,t_n\}$ over schema $R=\{X,A_1,A_2,\ldots A_n\}$ as follows. Let 
  $Y=\{A_1,A_2,\ldots A_n\}$,  $t_i[X]=i$ and the restriction $T[Y]$ is constructed by using Lemma~\ref{lem:allgraphs-are-weaksim} so that  the complement $\overline{G_Y}$ of the weak similarity graph $G_Y$ is isomorphic to $G$. Let $q=2$. We claim that $G$ is properly 3-colorable iff $g_5(X\times_{sp}Y)\le 2$.\\
  Assume first that $\chi(G)\le 3$. This means that $V$ can be partitioned into three independent sets, that is $G_Y$ can be partitioned into three cliques. So there exist three possible $Y$-extensions $\underline{y_i}\colon i=1,2,3$ so that each tuple of $T$ can be extended on $Y$ to one of them. Add $2n$ tuples consisting only $\bot$'s to $T$ to obtain table $U$. Let $U'$ be the spWorld of $U$ obtained by extending each $t_i$ to one of $\underline{y_i}\colon i=1,2,3$, and for each $i$ extend two of the added $2n$ tuples to $(i,\underline{y_j}$ where $j$ takes those two values that are not used in the extension of $t_i[Y]$. It is clear that $U'\models X\times Y$.\\
  On the other hand, assume that $g_5(X\times_{sp}Y)\le 2$. This means that there exists a table $T\subseteq U$ and an spWorld $U'$ of $U$ such that $U'\models X\times Y$ and $|U|\le 3n$. Since $U'[X]$ has at least $n$ different values and in total $U'$ has at most $3n$ tuples, $U'[Y]$ can have at most 3 different values if cross join $X\times Y$ holds in $U'$. In particular, tuples $t_i$ have at most three different $Y$-extensions, that is the weak similarity graph $G_Y$ can be covered with at most three cliques, that is $G$ can be covered by at most threee independent sets, which gives $\chi(G)\le 3$.
\end{proof}

\section{Conclusion and Future Directions}\label{cnclosnSec}
 We investigated two approximation measures $g_3$ (the ratio of the minimum number of tuples to be removed) and the newly introduced measure $g_5$ (the ratio of the  minimum number of tuples to be added) for strongly possible constraints in the present paper. Adding tuples is only useful for strongly possible constraints, as the new added tuples may add more values to the active domains of attributes and this may satisfy some strongly possible constraints. While adding new values cannot satisfy violated ordinary constraints or possible/certain dependencies. 

Instead of removing or adding only tuples one may consider removals and additions concurrently, or one might even consider tuple modifications. In case of spKeys if first the additions are performed, then after that by Proposition~\ref{prop:removenontotal} it is
always true that we can remove only non-total tuples; then, instead of any tuple removal, we
may add a new tuple with distinct values. Therefore, mixed modification in that way would
not change the approximation measure, as it is always equivalent to tuples addition only.  However, it is interesting to further investigate the order of removals and additions and its effect on considering the removals to be substituted by additions.
However, for other strongly possible constraints equivalent statement cannot be proven, so  the further research direction of mixed modifications are interesting research problem for spFDs, spMVDs and spCJs.

Modifying a tuple can also be done by removing the tuple and adding a new one with the required modification. So, the research question is still open whether tuple modification can provide better approximation result than adding then removing a tuple. 

We introduced strongly possible versions of multivalued dependencies and cross joins; definitions and properties are provided. We proved that spFDs determines spMVDs as it is the case for ordinary FDs and MVDs. We compared our proposed spMVDs with NMVDs introduced by Lien \cite{lien1979multivalued}, and its shown that NMVDs and spMVDs are independent of each other. An investigation on the satisfaction questions of spCJs is provided. We showed that checking the satisfaction an spCJ of two singular attributes is in P. And also we proved that the problem of checking spCJs in general is NP-complete.  

We extended the approximation measure to other strongly possible integrity constraints.
We proved that the $g_3$ value is less than or equal to the $g_5$ value for a given strongly possible key or FD in a table. Nevertheless, the two measures are independent of each other in the sense that $v = g_3 - g_5$ for any rational value $v$ such that $0\leq v < 1$. On the other hand, we showed that the two measures values are independent of each other for a given spMVD or spCJ in a table.

Some complexity issues were also treated in the present paper continuing the work started in \cite{alattar2020functional,alatar2024}. We showed that the decision problems is $T\models X\times_{sp}Y$ true, does $g_3(X\times_{sp}Y)\le q$ hold and does $g_5(X\times_{sp}Y)\le q$ hold are all NP-complete. 

A future research direction  is given by the fact that we have no fixed order between the $g_3$ and $g_5$ measures for spCJs and spMVDs. What is the complexity to decide whether $g_3(\sigma)<g_5(\sigma)$ for a given $\sigma$ and table $T$ if $\sigma$ is an spCJ or spMVD?
Another interesting question is whether one may get better approximations of spFDs, spCJs and spMVDs if deletions and additions of tuples are both allowed at the same time? This could be a question like edit distance of graphs, that is we calculate the total number of modifications of the table, additions and deletions together. 

\bibliographystyle{plain}
\bibliography{spFD-key_approx}

\end{document}